\newcommand{\beq}{\begin{equation}}
\newcommand{\eneq}{\end{equation}}
\newcommand{\twopartdef}[4]
{
	\left\{
		\begin{array}{ll}
			#1 & \mbox{if } #2 \\
			#3 & \mbox{if } #4
		\end{array}
	\right.
}
\newcommand{\bk}{\boldsymbol{k}}
\newcommand{\br}{\boldsymbol{r}}
\newcommand{\bp}{\boldsymbol{\pi}}
\newcommand{\mE}{\mathcal{E}}
\newcommand{\mS}{\mathcal{S}}
\newcommand{\hgen}{H_{\textrm{gen}}}
\newcommand{\hhub}{H_{\textrm{Hub}}}
\newcommand{\hh}{\widehat{h}}
\newcommand{\hxy}{H^{(x)}}
\newcommand{\haklt}{H^{(a)}}
\newcommand{\pxy}{\mathcal{P}^{(x)}}
\newcommand{\paklt}{\mathcal{P}^{(a)}}
\newcommand{\exy}{\mathcal{E}^{(x)}}
\newcommand{\vxy}{\hV^{(x)}}
\newcommand{\vaklt}{\hV^{(a)}}
\newcommand{\prodal}[2]{\underset{#1}{\overset{#2}{\prod}}}
\newcommand{\sumal}[2]{\underset{#1}{\overset{#2}{\sum}}}
\newcommand{\dlangle}{\langle\langle}
\newcommand{\drangle}{\rangle\rangle}
\newcommand{\nn}{\nonumber}
\newcommand{\cd}{c^\dagger}
\newcommand{\ed}{\eta^\dagger}
\newcommand{\hT}{\widehat{T}}
\newcommand{\hV}{\widehat{V}}
\newcommand{\hI}{\widehat{I}}
\newcommand{\hn}{\widehat{n}}
\newtheorem{theorem}{Theorem}[section]
\newtheorem{lemma}[theorem]{Lemma}
\begin{document}
\tolerance 10000
\title{Eta-Pairing in Hubbard Models: From Spectrum Generating Algebras to Quantum Many-Body Scars}
\author{Sanjay Moudgalya}
\affiliation{Department of Physics, Princeton University, NJ 08544, USA}

\author{Nicolas Regnault}
\affiliation{Department of Physics, Princeton University, NJ 08544, USA}
\affiliation{Laboratoire de Physique de l'Ecole normale sup\'{e}rieure, ENS, Universit\'{e} PSL, CNRS, Sorbonne Universit\'{e}, Universit\'{e} Paris-Diderot, Sorbonne Paris Cit\'{e}, Paris, France}

\author{B. Andrei Bernevig}
\affiliation{Department of Physics, Princeton University, NJ 08544, USA}

\begin{abstract}
We revisit the $\eta$-pairing states in Hubbard models and explore their connections to quantum many-body scars to discover a universal scars mechanism.
$\eta$-pairing occurs due to an algebraic structure known as a Spectrum Generating Algebra (SGA), giving rise to equally spaced towers of eigenstates in the spectrum. 
We generalize the original $\eta$-pairing construction and show that several Hubbard-like models on arbitrary graphs exhibit SGAs, including ones with disorder and spin-orbit coupling. 
We further define a Restricted Spectrum Generating Algebra (RSGA) and give examples of perturbations to the Hubbard-like models that preserve an equally spaced tower of the original model as eigenstates. 
The states of the surviving tower exhibit a sub-thermal entanglement entropy, and we analytically obtain parameter regimes for which they lie in the bulk of the spectrum, showing that they are exact quantum many-body scars.
The RSGA framework also explains the equally spaced towers of eigenstates in several well-known models of quantum scars, including the AKLT model. 
\end{abstract}
\maketitle
\date{\today}


\section{Introduction}\label{sec:intro}
The study of ergodicity and its breaking in isolated quantum systems has been a growing branch of research in quantum many-body physics.
In particular, novel mechanisms for the violation of the Eigenstate Thermalization Hypothesis (ETH)~\cite{deutsch1991quantum, srednicki1994chaos, rigol2008thermalization, d2016quantum} have gained recent attention, particularly a mechanism known as Quantum Many-Body Scarring. 
Unlike other mechanisms of ETH violation such as integrability and Many-Body Localization (MBL)~\cite{rahul2015review}, where the entire spectrum of a Hamiltonian violates ETH, quantum scarred models consist of some ETH-violating eigenstates in an otherwise ETH-satisfying spectrum. 
Hamiltonian systems currently known to exhibit quantum scars can be roughly classified into three categories. 
First, models that exhibit an analytically solvable equally spaced tower of eigenstates.
This line of study was initiated by the discovery of an ETH-violating tower of states in the celebrated AKLT model~\cite{aklt1987,Moudgalya2018a, Moudgalya2018b}.
Similar towers of eigenstates were subsequently discovered in several families of models~\cite{schecter2019weak, iadecola2019quantum2, chattopadhyay2019quantum, shibata2019onsagers, mark2020unified, moudgalya2020large}.
Second, ETH-violating eigenstates can be systematically embedded within an otherwise non-integrable model, as first illustrated in Ref.~\cite{mori2017eth}.
This formalism can be used to explain the existence of scars in several models~\cite{ok2019topological, schecter2019weak,lee2020exact, surace2020weakergodicitybreaking}. 
Third, approximate quantum scars that manifest in the dynamics of simple initial states, first found in the PXP model~\cite{turner2017quantum, turner2018quantum, khemani2019int, ho2018periodic, iadecola2019quantum, surace2019lattice}.
Similar phenomenology has also been found in a variety of models in one~\cite{choi2018emergent, bull2020quantum, michailidis2019slow,2bull2019scar, moudgalya2019quantum, hudomal2019, alhambra2019revivals, robinson2019signatures, james2019nonthermal, sinha2019chaos} and higher~\cite{michailidis2020stabilizing, lin2020quantum, voorden2020quantum} dimensions.
Unlike the other two categories of quantum scars, the towers of states in these models are not exactly solvable, although some different eigenstates can be analytically obtained in some cases~\cite{lin2018exact,shiraishi2019connection,mark2019new,lin2020quantum, voorden2020quantum}.  
However, this latter situation is the only case where the effect of quantum scars has been experimentally observed - in a cold atom experiment~\cite{bernien2017probing}.
In this work, we focus on the analytically tractable quantum scars of the first category, i.e. equally spaced towers of states.
All the known examples of such towers exist in hard-core bosonic spin models, which are hard to naturally realize in experiments. 
It is thus highly desirable to look for similar phenomena in more physically relevant electronic systems. 
Equally spaced towers of states have been known to occur in the celebrated Hubbard models since the seminal work of Yang that introduced the mechanism of $\eta$-pairing~\cite{yang1989eta}.
The existence of $\eta$-pairing and the related Off-Diagonal Long-Range Order (ODLRO) is attributed to the understanding of a pseudospin $SU(2)$ symmetry of the Hubbard model~\cite{zhang1990pseudospin, yang1990so}. 
There has since been a vast amount of literature studying the existence and properties of $\eta$-pairing and its generalizations to a wide range of models~\cite{yang1992remarks, shen1993exact, essler1995extended, deboer1995eta, albertini1995xxz, schadschneider1995superconductivity,fan1999the, fan2005entanglement, zhai2005two, li2019exact}.
Despite this large body of literature, the natural connection between the $\eta$-pairing states and the infinite-temperature quantum dynamics of the Hubbard models has not been extensively explored apart from the one-dimensional case, where the Hubbard model is fully integrable~\cite{essler2005one}.
Notable exceptions include Ref.~\cite{vafek2017entanglement}, which computed the entanglement of some analytically tractable eigenstates~\cite{yang1990so} of the $D$-dimensional Hubbard models, and Ref.~\cite{yu2018beyond}, where the effect of $\eta$-pairing on many-body localized Hubbard models was numerically explored. 
However, the analytically tractable $\eta$-pairing states in the $D$-dimensional Hubbard models~\cite{yang1990so} are not examples of quantum scars even though some of them have low entanglement since it was proven that they are the only eigenstates in their respective quantum number sectors~\cite{vafek2017entanglement}. 
That is, they do not appear to be mixed with ETH-satisfying states in the spectrum with the same set of quantum numbers. 
Given the similarity with quantum scars, it is natural to explore the precise connection of these $\eta$-pairing towers of states and quantum many-body scars.  
In particular, we ask if it is possible to deform the Hubbard model such that the pseudospin - $\eta$- symmetry is broken (and hence most of the $\eta$-pairing eigenstates would cease to be eigenstates) while preserving a subset of the analytically tractable eigenstates of the Hubbard model, which would then become examples of quantum many-body scars. 
To do so, we first recast $\eta$-pairing as a real-space phenomenon in contrast to the momentum-space approach employed in most of the literature.
This makes clear the minimal conditions necessary for the existence of $\eta$-pairing, and unravels a large class of Hubbard models with disorder and/or spin-orbit coupling that exhibit $\eta$-pairing.
We refer to this algebraic structure as a Spectrum Generating Algebra (SGA).
We then introduce the concept of a Restricted Spectrum Generating Algebra (RSGA), and we show that perturbations can be added to the Hubbard models that enable some of the analytically tractable $\eta$-pairing towers of the Hubbard models to survive as eigenstates of the perturbed models. 
We show analytically that these states, which have a low entanglement entropy \cite{vafek2017entanglement}, lie in the bulk of the spectrum of their quantum number sectors, and thus form examples of quantum many-body scars. 
We show that these RSGAs also appear in existing models of quantum scars in the literature, for example, the AKLT model~\cite{Moudgalya2018a} and the spin-1 XY model~\cite{schecter2019weak}. 
We note that related algebraic structures have appeared in the literature in the past in the context of Generalized Hubbard Models~\cite{deboer1995eta}, and more recently in the context of unifying formalisms for quantum scarred models~\cite{mark2020unified, bull2020quantum}.  
This paper is organized as follows.
In Sec.~\ref{sec:etareview} we review the Fermi-Hubbard model and the existence of a Spectrum Generating Algebra (SGA), i.e. the $\eta$-pairing states.  
In Sec.~\ref{sec:etareal}, we illustrate the generalization of the $\eta$-pairing states to Hubbard models on arbitrary graphs with disorder in the hopping terms and with spin-orbit coupling.
We discuss some examples in Sec.~\ref{sec:examples}.
In Sec.~\ref{sec:towers}, we introduce the concept of Restricted Spectrum Generating Algebra (RSGA), which captures the behavior of several known quantum scarred models, and we introduce perturbations to the (generalized) Hubbard models that realize an RSGA.  
There, we analytically show that the tower of eigenstates realized by the RSGA are quantum many-body scars of the perturbed Hamiltonians by deriving the conditions for which the states are in the bulk of the spectra of their quantum number sectors.
In Sec.~\ref{sec:RSGAquantumscars}, we comment on connections between the RSGA formalism and quantum scarred models in the literature. 
We conclude with a discussion of future directions in Sec.~\ref{sec:conclusions}.
\section{Review of $\boldsymbol{\eta}$-pairing in the Hubbard Model}\label{sec:etareview}
We review the construction of $\eta$-pairing states in the Fermi-Hubbard model (that we also refer to as the ``Hubbard model"), first obtained in Refs.~\cite{yang1989eta, zhang1990pseudospin, yang1990so}.
The Hubbard Hamiltonian is given by
\begin{eqnarray}
    &\hhub = \sumal{\sigma \in \{\uparrow, \downarrow\}}{}{\left[-t\sumal{\langle \br, \br'\rangle}{}{\left(\cd_{\br, \sigma} c_{\br', \sigma} + h.c\right)} -\mu\sumal{\br}{}{\cd_{\br, \sigma} c_{\br, \sigma}}\right]} \nn \\
    &+ U \sumal{\br}{}{\hn_{\br, \uparrow} \hn_{\br, \downarrow}}.
\label{eq:FermiHubbard}
\end{eqnarray}
where $\hn_{\br,\sigma} \equiv \cd_{\br,\sigma} c_{\br,\sigma}$, $\{\br\}$ is the set of sites on an arbitrary graph, (in $D$ dimensions) and $\langle \br, \br' \rangle$ denotes nearest neighboring sites. 
On a $D$-dimensions hypercubic lattice with periodic boundary conditions and even lengths in all directions, the Hubbard model admits translation invariance, charge and spin $SU(2)$ symmetries, and lattice mirror symmetries. 
In that case, the Hubbard Hamiltonian can be written as
\begin{equation}
    \hhub = \sumal{\bk}{}{\sumal{\sigma \in \{\uparrow, \downarrow\}}{}{\mE_{\bk} \cd_{\bk, \sigma} c_{\bk, \sigma}}} + U \sumal{r}{}{\hn_{\br, \uparrow} \hn_{\br, \downarrow}},
\label{eq:HubbardPBC}
\end{equation}
where
\begin{equation}
    \mE_{\bk} \equiv -\mu -2 t\sumal{i = 1}{D}{\cos k_i},
\label{eq:dispersion}
\end{equation} 
where $k_i$ is the momentum in the $i$-th direction. 
For these Hamiltonians, Refs.~\cite{yang1989eta, zhang1990pseudospin} showed that there exists an operator $\ed$ defined as 
\begin{eqnarray}
    \ed \equiv \sumal{\bk}{}{\cd_{\bk,\uparrow} \cd_{\bp-\bk, \downarrow}} = \sumal{\br}{}{e^{i\bp\cdot\br} \cd_{\br, \uparrow} \cd_{\br,\downarrow}},
\label{eq:etaops}
\end{eqnarray}
where $\bp \equiv (\pi, \pi, \cdots, \pi)$, that satisfies the relation
\begin{equation}
    [\hhub, \ed] = (U - 2\mu)\ed. 
\label{eq:HubbardSGA}
\end{equation}
In fact, for a system with $L$ sites in each dimension, the $\ed$ and $\eta$ operators, along with 
\begin{equation}
    \eta_z \equiv \frac{1}{2}[\ed, \eta] = \frac{1}{2}\left(\sumal{\br, \sigma}{}{\hn_{\br, \sigma}} - L^D\right),
\end{equation}
constitute a full (pseudospin) $SU(2)$ symmetry of the Hubbard model~\cite{zhang1990pseudospin}.
That is,
\begin{eqnarray}
    &[\eta_z, \ed] = \ed,\;\;[\eta_z, \eta] = -\eta,\nn \\
    &[\hhub, \eta_z] = 0,\;\;[\hhub, \boldsymbol{\eta}^2] = 0,
\end{eqnarray}
where $\boldsymbol{\eta}^2$ is the total pseudospin operator
\begin{equation}
    \boldsymbol{\eta}^2 \equiv \frac{1}{2}(\ed\eta + \eta \ed) + (\eta_z)^2.
\end{equation}
Eq.~(\ref{eq:HubbardSGA}) is said to be an example of a \textit{Spectrum Generating Algebra} (SGA)~\cite{barut1965dynamical, dothan1965series}, when an operator satisfying\footnote{We note that the terms ``Spectrum Generating Algebra" and ``Dynamical Symmetry"  have been used to denote a variety of related (but subtly distinct) concepts in the literature~\cite{arno1988dynamical, solomon1998coherent, gruber2012symmetries, leviatan2011partial}. 
In this work we will only use SGA to refer to Eq.~(\ref{eq:SGA}).}
\begin{equation}
    [H, \ed] = \mE \ed,
\label{eq:SGA}
\end{equation}
generates a series of equally spaced energy eigenstates.
Indeed, if $\ket{\psi_0}$ is an eigenstate of $H$ with energy $E_0$, $\ed \ket{\psi_0}$ is also an eigenstate with energy $E_0 + \mE$ (see App.~\ref{app:RSGAtower}).
Iterating this idea until $(\ed)^{N+1}\ket{\psi_0}$ vanishes (which it does, as $\ed$ increases the number of fermions by 2), we obtain an equally-spaced tower of states given by
\begin{equation}
    \{\ket{\psi_0}, \ed \ket{\psi_0}, \cdots, (\ed)^N \ket{\psi_0}\} 
\label{eq:psi0tower}
\end{equation}
with corresponding energies given by
\begin{equation}
    \{E_0, E_0 + \mE, E_0 + 2 \mE, \cdots, E_0 + N\mE\}.
\end{equation}
In the case of the Hubbard model, $\mE = (U - 2\mu)$ (see Eq.~(\ref{eq:HubbardSGA})).

Although Eq.~(\ref{eq:SGA}) leads to the existence of a tower of states starting from $\ket{\psi_0}$, it does not imply that the expressions of any of the eigenstates can be obtained analytically. 
However, for the Hubbard Hamiltonian of Eq.~(\ref{eq:HubbardPBC}) in any dimensions, several $U$-independent eigenstates can be obtained analytically~\cite{zhang1990pseudospin}.
Note that the vacuum state $\ket{\Omega}$, and the spin-polarized eigenstates of the hopping operator are also ferromagnetic eigenstates of the Hubbard Hamiltonian.
As a consequence of the spin $SU(2)$ symmetry, multiplets eigenstates can be obtained by applying spin raising and lowering operators on these eigenstates.
Further, several more eigenstates are obtained applying the $\ed$ operator repeatedly on those eigenstates, although not all of the resulting eigenstates are independent~\cite{zhang1990pseudospin}.
\section{$\boldsymbol{\eta}$-pairing on arbitrary graphs}\label{sec:etareal}
To unravel the most general necessary conditions for having a spectrum generating algebra, we break several symmetries of the Hubbard Hamiltonian $\hhub$.  We consider much more general Hubbard Hamiltonians with disorder and spin-orbit coupling on arbitrary graphs.
This forces us to obtain a real-space understanding of $\eta$-pairing, unlike the momentum space derivations used in most of the literature.
We consider the generalized Hubbard Hamiltonian
\begin{eqnarray}
    &\hgen = -\sumal{\sigma, \sigma'}{}{\sumal{\langle\br, \br'\rangle}{}{\underbrace{\left(t^{\sigma,\sigma'}_{ \br, \br'}\cd_{\br, \sigma} c_{\br', \sigma'} + t^{\sigma',\sigma}_{\br', \br}\cd_{\br', \sigma'} c_{\br, \sigma}\right)}_{\equiv \hT^{\sigma,\sigma'}_{\br, \br'}}}} \nn \\
    &- \sumal{\br, \sigma}{}{\mu_{\br, \sigma}\hn_{\br, \sigma}} + \sumal{\br}{}{U_{\br} \hn_{\br, \uparrow} \hn_{\br, \downarrow}},
\label{eq:Hamilfinalmain}
\end{eqnarray}
where $\sigma, \sigma'$ denotes the spin, $\langle \br, \br'\rangle$ denote nearest neighboring sites $\br$ and $\br'$ on the graph, $\{t^{\sigma,\sigma'}_{\br,\br'}\}$ are spin and position dependent hopping strengths satisfying Hermiticity ($t^{\sigma,\sigma'}_{\br,\br'} = t^{\sigma',\sigma\star}_{\br',\br}$), and $\{\mu_{\br, \sigma}\}$ are spin dependent real chemical potentials.
Note that since we are considering the Hamiltonian Eq.~(\ref{eq:Hamilfinalmain}) on arbitrary graphs, we can without loss of generality consider the hopping terms $t^{\sigma,\sigma'}_{\br, \br'}$ to be non-vanishing on nearest neighboring sites on the graph.
%
%
%
%
Note that the Hamiltonian of Eq.~(\ref{eq:Hamilfinalmain}) breaks all the usual symmetries of the original Hubbard model of Eq.~(\ref{eq:FermiHubbard}) except the charge $U(1)$ symmetry.
Despite breaking these symmetries, we find that $\hgen$ admits an SGA (and hence preserves the pseudospin ``$\eta$" symmetry) provided the hopping strengths $\{t^{\sigma,\sigma'}_{\br, \br'}\}$, and $\{\mu_{\br, \sigma}\}$ are appropriately chosen.
We define the $\ed$ operator to be 
\begin{equation}
    \ed \equiv \sumal{\br}{}{q_{\br} \ed_{\br}} \equiv \sumal{\br}{}{q_{\br} \cd_{\br,\uparrow}\cd_{\br,\downarrow}},
\label{eq:etadefn}
\end{equation}
and derive conditions on $\{q_{\br}\}$ and $\{t^{\sigma, \sigma'}_{\br, \br'}\}$ such that the Hamiltonian of Eq.~(\ref{eq:Hamilfinalmain}) admits an SGA. 
We first explicitly compute the following commutators of $\ed$ with the on-site terms of the Hamiltonian 
\begin{eqnarray}
    &[\sumal{\br, \sigma}{}{\mu_{\br, \sigma} \hn_{\br, \sigma}}, \ed] = \sumal{\br}{}{q_{\br}\mu_{\br, \sigma} [\hn_{\br, \sigma}, \ed_{\br}]} = \sumal{\br}{}{q_{\br} \left(\sumal{\sigma}{}{\mu_{\br, \sigma}}\right) \ed_{\br}} \nn \\
    &[\sumal{\br, \sigma}{}{U_{\br} \hn_{\br, \uparrow} \hn_{\br, \downarrow}}, \ed] = \sumal{\br}{}{q_{\br} U_{\br} [\hn_{\br, \uparrow} \hn_{\br, \downarrow}, \ed_{\br}]} = \sumal{\br}{}{q_{\br} U_{\br}\ed_{\br}} \nn \\
\label{eq:etagraphVcomm}
\end{eqnarray}
where we have used Eq.~(\ref{eq:usefuletacoms1}). 
Thus, by choosing on-site chemical potentials and interactions that satisfy 
\begin{equation}
    U_{\br} - \mu_{\br, \uparrow} - \mu_{\br, \downarrow} = \mE,
\label{eq:onsitechemical}
\end{equation}
we obtain
\begin{equation}
    [\sumal{\br, \sigma}{}{\mu_{\br, \sigma} \hn_{\br, \sigma}} + \sumal{\br, \sigma}{}{U_{\br} \hn_{\br, \uparrow} \hn_{\br, \downarrow}}, \ed]  = \mE \ed.
\label{eq:etagraphMVcomm}
\end{equation}
Note that Eq.~(\ref{eq:onsitechemical}) allows for the addition of disordered on-site magnetic fields, see Ref.~\cite{yu2018beyond} for an example of $\eta$-pairing in such a setting. 
We only require that $\mE$ \textit{does not} depend on $\br$.
We now move on to the hopping term in Eq.~(\ref{eq:Hamilfinalmain}).
In App.~\ref{app:etareview}, we show the following (see Eq.~(\ref{eq:Tetacommreqd}))
\begin{eqnarray}
    &[\sumal{\sigma,\sigma'}{}{\hT^{\sigma,\sigma'}_{\br, \br'}}, q_{\br} \ed_{\br} + q_{\br'} \ed_{\br'}] = 0 \nn \\
\label{eq:commrelations}
\end{eqnarray}
provided $\{t^{\sigma,\sigma'}_{\br, \br'}\}$ and $\{q_{\br}\}$ satisfy (see Eq.~(\ref{eq:TSGAcondition}))
\begin{equation}
    q_{\br}t^{\sigma',\sigma}_{\br', \br}s_\sigma + q_{\br'}t^{\bar{\sigma},\bar{\sigma'}}_{\br, \br'}s_{\sigma'} = 0\;\;\;\forall \sigma,\sigma',\;\;\;\forall \langle \br, \br' \rangle,
\label{eq:SGAconditionmain}
\end{equation}
where we have defined
\begin{equation}
    s_\sigma \equiv \twopartdef{+1}{\sigma = \uparrow}{-1}{\sigma = \downarrow},\;\;\; \bar{\sigma} \equiv \twopartdef{\downarrow}{\sigma = \uparrow}{\uparrow}{\sigma = \downarrow}.
\label{eq:sigmadefns}
\end{equation}
Using Eqs.~(\ref{eq:etagraphMVcomm}) and (\ref{eq:commrelations}),  we obtain
\begin{equation}
    [\hgen, \ed] = \mE \ed,
\label{eq:etapairing}
\end{equation}
illustrating the generality of $\eta$-pairing.
\section{Examples of $\eta$-pairing}\label{sec:examples}
We now discuss a few examples of $\eta$-pairing with and without spin-orbit coupling.
%
%
\subsection{Without Spin-Orbit Coupling}\label{sec:withoutsoc}
We first consider the case without spin-orbit coupling. That is, we set
\begin{equation}
    t^{\uparrow, \downarrow}_{\br, \br'} = t^{\downarrow,\uparrow}_{\br, \br'} = 0,\;\;\;\forall \langle \br, \br' \rangle. 
\label{eq:noSOC}
\end{equation}
Eq.~(\ref{eq:SGAconditionmain}) reads
\begin{equation}
    q_{\br} t^{\sigma,\sigma}_{\br', \br} + q_{\br'} t^{\bar{\sigma}, \bar{\sigma}}_{\br,\br'} = 0\;\;\;\sigma \in \{\uparrow,\downarrow\},\;\;\forall \langle \br, \br'\rangle,
\end{equation}
leading to 
\begin{equation}
    \frac{q_{\br}}{q_{\br'}} = -\frac{t^{\downarrow,\downarrow}_{\br, \br'}}{(t^{\uparrow,\uparrow}_{\br,\br'})^\ast} = -\frac{t^{\uparrow,\uparrow}_{\br, \br'}}{(t^{\downarrow,\downarrow}_{\br,\br'})^\ast}.
\label{eq:noSOCconditions}
\end{equation}
Using Eq.~(\ref{eq:noSOCconditions}), we obtain
\begin{equation}
     |t^{\downarrow\downarrow}_{\br,\br'}| = |t^{\uparrow,\uparrow}_{\br,\br'}|  \implies \frac{|q_{\br}|}{|q_{\br'}|} = \frac{|t^{\downarrow,\downarrow}_{\br,\br'}|}{|t^{\uparrow,\uparrow}_{\br,\br'}|} = 1.
\label{eq:noSOCconditions1}
\end{equation}
Without loss of generality, as the norm of $q_{\br}$ is site-independent, we can set $|q_{\br}| = 1$ and choose
\begin{equation}
    q_{\br} = e^{i\phi_{\br}},\;\; t^{\uparrow, \uparrow}_{\br, \br'} = t_{\br, \br'} e^{i \theta^{\uparrow, \uparrow}_{\br, \br'}},\;\; t^{\downarrow, \downarrow}_{\br, \br'} = t_{\br, \br'} e^{i \theta^{\downarrow, \downarrow}_{\br, \br'}}    
\label{eq:phasechoice}
\end{equation}
where $t_{\br, \br'}$ is a positive real number, and
\begin{equation}
    \theta^{\uparrow,\uparrow}_{\br, \br'} + \theta^{\downarrow,\downarrow}_{\br, \br'} + \pi = \phi_{\br} - \phi_{\br'}.
\label{eq:phasecondition}
\end{equation}
We now illustrate some examples of an SGA in a disordered system.
For a one-dimensional chain of length $L$, $1 \leq \br \leq L$ the choice of hoppings 
\begin{equation}
\theta^{\uparrow,\uparrow}_{\br, \br'} = \theta^{\downarrow, \downarrow}_{\br, \br'} = 0
\label{eq:thetachoice}
\end{equation}
corresponds to the usual Hubbard model of Eq.~(\ref{eq:FermiHubbard}).
Thus, according to Eq.~(\ref{eq:phasecondition}), we see that we can choose $q_{\br} = e^{i\bp\cdot\br}$ when $L$ is even for periodic boundary conditions or any $L$ for open boundary conditions, recovering the standard $\ed$ operator of Eq.~(\ref{eq:etaops}).
In fact, for  bipartite graphs with sublattices $A$ and $B$, $\ed$ operators can be found for Hubbard models on by choosing
\begin{equation}
    q_{\br} = \twopartdef{+1}{\br \in A}{-1}{\br \in B},
\label{eq:qrbipartite}
\end{equation}
which has also been derived in Ref.~\cite{yang1992remarks}.
To obtain $\eta$-pairing states on non-bipartite lattices, we could choose $q_{\br} = \pm 1$, but we would necessarily have some pairs of nearest neighboring sites $\br$ and $\br'$ such that $\phi_{\br} = \phi_{\br'}$. 
Eq.~(\ref{eq:phasechoice}) for such $\br$ and $\br'$ can be satisfied by the choice 
\begin{equation}
    \theta^{\uparrow, \uparrow}_{\br, \br'} = \theta^{\downarrow, \downarrow}_{\br, \br'} = \frac{\pi}{2}.
\end{equation}
For example, on a triangular lattice, for every triangle with vertices denoted by $\br_1$, $\br_2$, $\br_3$ we could for example choose $\phi_{\br_1} = \phi_{\br_3} = 0$ and $\phi_{\br_2} = \pi$. In such a case, we need to choose for example $t_{\br_1, \br_2} = t_{\br_2, \br_3} = +t$ and $t_{\br_3, \br_1} = i t$ to satisfy Eq.~(\ref{eq:phasecondition}), which corresponds to the addition of a $\pi/2$ flux~\cite{zhai2005two}.
\subsection{With Spin-Orbit Coupling}\label{sec:withsoc}
We now explore the case when hopping terms with spin-orbit coupling are added to the generalized Hubbard Hamiltonian of Eq.~(\ref{eq:Hamilfinalmain}). 
In addition to Eq.~(\ref{eq:noSOC}), from Eq.~(\ref{eq:SGAconditionmain}) we obtain
\begin{equation}
    q_{\br} t^{\sigma,\bar{\sigma}}_{\br', \br} - q_{\br'} t^{\sigma, \bar{\sigma}}_{\br,\br'} = 0\;\;\;\sigma \in \{\uparrow,\downarrow\},\;\;\forall \langle \br, \br'\rangle, 
\end{equation}
enforcing that
\begin{equation}
    \frac{q_{\br}}{q_{\br'}} = \frac{t^{\uparrow,\downarrow}_{\br, \br'}}{\left(t^{\downarrow,\uparrow}_{\br,\br'}\right)^\ast} = \frac{t^{\downarrow,\uparrow}_{\br, \br'}}{\left(t^{\uparrow,\downarrow}_{\br,\br'}\right)^\ast} \implies |t^{\uparrow,\downarrow}_{\br, \br'}| = |t^{\downarrow,\uparrow}_{\br,\br'}|.
\label{eq:SOCconditions}
\end{equation}
Thus, in addition to Eq.~(\ref{eq:phasechoice}) we can choose
\begin{equation}
    t^{\uparrow, \downarrow}_{\br, \br'} = \widetilde{t}_{\br, \br'} e^{i \theta^{\uparrow, \downarrow}}_{\br, \br'},\;\;\;t^{\downarrow, \uparrow}_{\br, \br'} = \widetilde{t}_{\br, \br'} e^{i \theta^{\downarrow, \uparrow}}_{\br, \br'}
\label{eq:SOCphasechoice}
\end{equation}
where $\widetilde{t}_{\br, \br'}$ is a positive real number such that
\begin{equation}
    \theta^{\uparrow,\downarrow}_{\br, \br'} + \theta^{\downarrow, \uparrow}_{\br, \br'} = \phi_{\br} - \phi_{\br'}. 
\label{eq:SOCphasecondition}
\end{equation}
Thus, nearest-neighbor hopping terms with spin-orbit coupling to the Hubbard model on a bipartite graph (so that the hoppings are always between sites on different sublattices), provided they satisfy
\begin{equation}
    \theta^{\uparrow,\downarrow}_{\br, \br'} + \theta^{\downarrow,\uparrow}_{\br, \br'} = \pm \pi \;\; \implies t^{\uparrow, \downarrow}_{\br, \br'} = -t^{\downarrow, \uparrow}_{\br, \br'},  
\end{equation}
where we have used Eqs.~(\ref{eq:phasecondition}), (\ref{eq:thetachoice}),  and (\ref{eq:SOCphasecondition}).
We can indeed verify that in this limit, we recover the conditions derived for $\eta$-pairing in translation-invariant spin-orbit coupled Hubbard models in Ref.~\cite{li2019exact}.
\section{Quantum Many-Body Scars from the Hubbard Model}\label{sec:towers}
As we showed in the previous sections, the existence of an SGA in the generalized Hubbard models gives rise to several towers of $\eta$-pairing states.
We now ask if perturbations can be added to those models that preserve  \textit{some} but not all of the towers generated by $\eta$-pairing are preserved. 
We introduce the concept of a Restricted Spectrum Generating Algebra (RSGA), a restriction of the SGA discussed in Sec.~\ref{sec:etareview}, and illustrate perturbations that realize those conditions.
These perturbed Hamiltonians hence preserve some towers generated by $\eta$-pairing, and we argue that the resulting towers of eigenstates become quantum many-body scars in the perturbed Hamiltonians.
We note that everything we derive here will apply to both the original Hubbard model of Eq.~(\ref{eq:HubbardPBC}) and the generalized Hubbard models of Eq.~(\ref{eq:Hamilfinalmain}), but we focus on the latter for the sake of generality.  
\subsection{RSGA of Order 1}\label{sec:RSGA1}
A Hamiltonian $H$ is said to exhibit a \textit{Restricted Spectrum Generating Algebra of Order 1} (RSGA-1) if there exists a state $\ket{\psi_0}$ and an operator $\ed$ such that $\ed\ket{\psi_0} \neq 0$ that satisfy 
\begin{eqnarray}
    &&\textrm{(i)}\ H\ket{\psi_0} = E_0 \ket{\psi_0} \nn \\
    &&\textrm{(ii)}\ [H, \ed]\ket{\psi_0} = \mE\ed\ket{\psi_0} \nn \\
    &&\textrm{(iii)}\ [[H, \ed], \ed] = 0.
\label{eq:RSGA1cond}
\end{eqnarray}
As we show in Lemma~\ref{lem:RSGA1} in App.~\ref{app:RSGAtower}, the conditions of Eq.~(\ref{eq:RSGA1cond}) lead to the existence of a equally-spaced tower of states $\{(\ed)^n \ket{\psi_0}\}$ starting from $\ket{\psi_0}$.  
We illustrate this concept by choosing $\ket{\psi_0} = \ket{\Omega}$, the empty vacuum state, and as a perturbation of the Hamiltonian  $\hgen$, the electrostatic interaction of the form
\begin{equation}
    \hI_2 \equiv \sumal{\sigma,\sigma'}{}{\sumal{\dlangle \br, \br'\drangle}{}{V^{\sigma, \sigma'}_{\br,\br'} \hn_{\br, \sigma} \hn_{\br', \sigma'}}},
\label{eq:Int}
\end{equation}
where $\dlangle\br,\br'\drangle$ runs over some or all pairs of sites on the graph. 
Note that this sum can be restricted to only nearest-neighbor sites for a more physical choice of interaction. 
Since $\ket{\Omega}$ is an eigenstate of the Hubbard Hamiltonian $\hgen$ and $\hI_2$, we obtain
\begin{equation}
    (\hgen + \hI_2)\ket{\Omega} = 0,
\end{equation}
satisfying condition (i) of RSGA-1 with $E_0 = 0$. 
Further, using the commutation relation in Eq.~(\ref{eq:usefuletacoms1}) 
\begin{equation}
    [\hn_{\br, \sigma}, \ed_{\br}] = \ed_{\br}, 
\label{eq:numbercomm}
\end{equation}
we deduce for $\br \neq \br'$ that
\begin{equation}
    [\hn_{\br, \sigma} \hn_{\br', \sigma'}, q_{\br} \ed_{\br} + q_{\br'}\ed_{\br'}] = q_{\br}\ed_{\br}\hn_{\br',\sigma'} + q_{\br'} \hn_{\br, \sigma}\ed_{\br'}.
\label{eq:electrocomm}
\end{equation}
Using Eq.~(\ref{eq:electrocomm}), we note that
\begin{equation}
    [\hn_{\br, \sigma} \hn_{\br',\sigma'}, \ed]\ket{\Omega} = [\hn_{\br, \sigma} \hn_{\br',\sigma'}, q_{\br} \ed_{\br} + q_{\br'}\ed_{\br'}]\ket{\Omega} = 0. 
\label{eq:vanishingcond}
\end{equation} 
due to the $\br$ and $\br'$ occupations of the vacuum state.
As a consequence, we obtain $[\hI_2, \ed]\ket{\Omega} = 0$ and $[\hgen + \hI_2, \ed] = \mE\ed$, satisfying condition (ii) of RSGA-1. 
Further, we note that using Eqs.~(\ref{eq:numbercomm}) and (\ref{eq:electrocomm}), we obtain
\begin{equation}
    [[\hn_{\br, \sigma} \hn_{\br', \sigma'}, q_{\br}\ed_{\br} + q_{\br'}\ed_{\br'}], q_{\br}\ed_{\br} + q_{\br'}\ed_{\br'}] = 2 q_{\br} q_{\br'} \ed_{\br} \ed_{\br'}.
\label{eq:electrocomm2}
\end{equation}
Thus, we obtain
\begin{equation}
    [[\hI_2, \ed], \ed] = 2\sumal{\dlangle \br, \br'\drangle}{}{(\sumal{\sigma, \sigma'}{}{V^{\sigma, \sigma'}_{\br, \br'}})q_{\br}q_{\br'}\ed_{\br} \ed_{\br'}}.
\label{eq:I2doubcomm}
\end{equation}
Setting
\begin{equation}
    \sumal{\sigma, \sigma'}{}{V^{\sigma, \sigma'}_{\br, \br'}} = 0,
\label{eq:sumVcond}
\end{equation}
and using Eqs.~(\ref{eq:I2doubcomm}) and (\ref{eq:etapairing}), we obtain
\begin{equation}
    [[\hgen + \hI_2, \ed],\ed] = 0,
\end{equation}
satisfying condition (iii) of RSGA-1.
Thus, as a consequence of Lemma~\ref{lem:RSGA1}, the Hamiltonian $(\hgen + \hI_2)$ exhibits the tower $\{(\ed)^n\ket{\Omega}\}$ as eigenstates, although other $\eta$-pairing towers (starting from other states than the vacuum state) of the Hubbard Hamiltonian might not be preserved.   
A simple physical interaction that satisfies Eq.~(\ref{eq:sumVcond}) is the nearest neighbor $S_z-S_z$ interaction.
\subsection{RSGA of Order $\boldsymbol{M}$}\label{sec:RSGAM}
We now study perturbations to the Hubbard model that do not satisfy the conditions of RSGA-1 but still preserve a tower of states.
The concept of RSGA-1 can be generalized straightforwardly as follows.
We define a set of states $\{\ket{\psi_n}\}$ as
\begin{equation}
    \ket{\psi_n} \equiv (\ed)^n\ket{\psi_0}.
\label{eq:psindefn}
\end{equation}
We define a set of operators $\{H_n\}$ as
\begin{equation}
    H_0 \equiv H,\;\; H_{n+1} \equiv [H_n, \ed],\;\;\forall n \geq 0. 
\label{eq:Hndefn}
\end{equation} 
A Hamiltonian $H$ is said to exhibit a \textit{Restricted Spectrum Generating Algebra of Order $M$} (RSGA-$M$) if there exists a state $\ket{\psi_0}$ and an operator $\ed$ such that $\ket{\psi_n} \neq 0$ for $n \leq M$ that satisfy
\begin{eqnarray}
    &&\textrm{(i)}\ H\ket{\psi_0} = E_0\ket{\psi_0} \nn \\
    &&\textrm{(ii)}\ H_1\ket{\psi_0} = \mE\ed\ket{\psi_0}\nn \\
    &&\textrm{(iii)}\ H_n\ket{\psi_0} = 0 \;\; \forall\ n,\ 2 \leq n \leq M\nn \\
    &&\textrm{(iv)}\ H_n \twopartdef{\neq 0}{n \leq M}{= 0}{n = M+1},
\label{eq:RSGA2cond}
\end{eqnarray}
where condition (iii) of RSGA-1 of Eq.~(\ref{eq:RSGA1cond}) has been modified.
As we show in Lemma~\ref{lem:RSGA2} in App.~\ref{app:RSGAtower}, the conditions of Eq.~(\ref{eq:RSGA2cond}) is equivalent to the existence of a equally-spaced tower of states $\{(\ed)^n \ket{\psi_0}\}$ starting from $\ket{\psi_0}$.  
Note that conditions (i)-(iii) of RSGA-$M$ lead to the existence of exact eigenstates $\{\ket{\psi_0}, \ed\ket{\psi_0}, \cdots, (\ed)^M\ket{\psi_0}\}$ with energies $\{E_0, E_0 + \mE, \cdots, E_0 + M \mE\}$. 
If we do not add condition (iv), then these are all the guaranteed eigenstates, for a given $M$. 
Condition (iv) ensures that $(\ed)^n\ket{\psi_0}$ is also an eigenstate of $H$ for any $n$ as long as it does not vanish.
We now explicitly construct a perturbation to the generalized Hubbard model $\hgen$ that admits an RSGA of order $M$.
Consider the $(M+1)$-body density interaction term
\begin{equation}
	\hI_{M+1} \equiv \sumal{\{\br_j\}}{}{V^{\{\sigma_j\}}_{\{\br_j\}}\prodal{j = 1}{M+1}{\hn_{\br_j, \sigma_j}}},
\label{eq:InteractionN}
\end{equation}
where $\{\br_j\}$ represent a (chosen) set of $(M+1)$ distinct sites and $\{\sigma_j\}$ a set of $(M+1)$ spins. 
On the vacuum state $\ket{\Omega}$, we obtain
\begin{equation}
    (\hgen + \hI_{M+1})\ket{\Omega} = 0,
\end{equation}
satisfying condition (i) of RSGA-$M$ with $E_0 = 0$. 
Using Eq.~(\ref{eq:numbercomm}), we obtain
\begin{equation}
    [\prodal{j = 1}{M+1}{\hn_{\br_j, \sigma_j}}, \ed] = \sumal{k = 1}{M+1}{q_{\br_k} \ed_{\br_k} \prodal{j = 1, j\neq k}{M+1}{\hn_{\br_j, \sigma_j}}},
\end{equation}
and thus $[\hI_{M+1}, \ed]\ket{\Omega} = 0$. 
Using Eq.~(\ref{eq:etapairing}) we further obtain $[\hgen + \hI_{M+1}, \ed]\ket{\Omega} = \mE\ed\ket{\Omega}$, satisfying condition (ii) of RSGA-$M$.
Similarly, we can compute subsequent commutators with $\ed$. 
Since according to Eq.~(\ref{eq:numbercomm}) each commutator replaces an $\hn_{\br, \sigma}$ by $\ed_{\br}$, applying less than $(M+1)$ commutators, we obtain
\begin{equation}
    \underbrace{[[\prodal{j = 1}{M+1}{\hn_{\br_j, \sigma_j}}, \ed], \ed] \cdots ]\cdots]]}_{n\ \textrm{times}} \neq 0 \;\; \forall\ n,\ 2 \leq n \leq M.
\end{equation}
Furthermore, since the commutator applied less than $(M+1)$ times necessarily consists of at least one number operators $\hn_{\br, \sigma}$ in each term, it vanishes on the vacuum state $\ket{\Omega}$, i.e. 
\begin{equation}
    \underbrace{[[\prodal{j = 1}{M+1}{\hn_{\br_j, \sigma_j}}, \ed], \ed] \cdots ]\cdots]]}_{n\ \textrm{times}}\ket{\Omega} = 0 \;\; \forall\ n,\ 2 \leq n \leq M.
\end{equation}
The interaction $\hI_{M+1}$ of Eq.~(\ref{eq:InteractionN}) along with Eq.~(\ref{eq:etapairing}) thus satisfies conditions (iii) of Eq.~(\ref{eq:RSGA2cond}) with $\ket{\psi_0} = \ket{\Omega}$.  
Applying the commutator $(M+1)$ times, we obtain
\begin{equation}
	\underbrace{[[\prodal{j = 1}{M+1}{\hn_{\br_j, \sigma_j}}, \ed], \ed] \cdots ]\cdots]]}_{M+1\ \textrm{times}} = \prodal{j = 1}{M+1}{\ed_{\br_j}}.
\label{eq:prodcomm}
\end{equation}
Using Eqs.~(\ref{eq:InteractionN}) and (\ref{eq:prodcomm}), we obtain
\begin{equation}
    \underbrace{[[\hI_{M+1}, \ed], \ed] \cdots ]\cdots]]}_{M+1\ \textrm{times}} = \left(\sumal{\{\br_j\}, \{\sigma_j\}}{}{V^{\{\sigma_j\}}_{\{\br_j\}}}\right)\prodal{j = 1}{M}{\ed_{\br_j}}.
\end{equation}
Condition (iv) of Eq.~(\ref{eq:RSGA2cond}) can be satisfied if 
\begin{equation}
    \sumal{\{\br_j\}, \{\sigma_j\}}{}{V^{\{\sigma_j\}}_{\{\br_j\}}} = 0,
\label{eq:Imcondition}
\end{equation}
and the Hamiltonian $(\hgen + \hI_{M+1})$ exhibits an RSGA of order $M$.
Note that while these perturbations preserve the same tower of states $\{(\ed)^n\ket{\Omega}\}$ as the perturbations illustrated in Sec.~\ref{sec:RSGA1}, the algebra is different; this allows us to obtain many different terms that can be added to the Hamiltonian in order to maintain this tower of states.
\subsection{Connections to Quantum Scars}\label{sec:quantumscarconnection}
We now prove that the towers $\{(\ed)^n \ket{\Omega}\}$ in the Hamiltonians $(\hhub + \hI_{M+1})$ discussed as examples in Secs.~\ref{sec:RSGA1} and \ref{sec:RSGAM} are the quantum many-body scars for appropriate values of the Hamiltonian parameters, when it is non-integrable and expected to satisfy ETH. 
Physically, the states of the tower are composed of doubly-occupied quasiparticles ``doublons" dispersing on top of the vacuum state $\ket{\Omega}$.
The energy of a doublon is $(U - 2\mu)$ under the Hubbard Hamiltonian, hence the state $(\ed)^n\ket{\Omega}$ has an energy $n(U - 2\mu)$, spin $S_z = 0$, and charge $Q = 2n$. 
The highest state of the tower consists of all the sites being filled with doublons.
As expected for quasiparticles on top of a low entanglement state~\cite{Moudgalya2018b, schecter2019weak}, and as rigorously computed in Ref.~\cite{vafek2017entanglement}, the entanglement entropy $\mS$ of the state $(\ed)^n \ket{\Omega}$ scales as the logarithm of the subsystem volume ($\mS \sim \log V$), in contrast to the volume-law ($\mS \sim V$) predicted by ETH~\cite{d2016quantum}. 
By estimating the energies of the states in the sector with spin $S_z = 0$ and charge $Q = 2n$, we now show that some states of the tower $\{(\ed)^n \ket{\Omega}\}$ can be in the bulk of the spectrum of their quantum number sectors. Note that, unlike in  Ref.~\cite{vafek2017entanglement}, we have now lost the $\eta$-pairing symmetry, and hence the theorem, proven in  Ref.~\cite{vafek2017entanglement} - that the $\eta$-pairing states are the only ones in their quantum number sectors, does not apply. 
For example, consider the Hubbard model $\hhub$ of Eq.~(\ref{eq:FermiHubbard}) in one dimension with an even system size $L$.
Non-interacting ferromagnetic eigenstates with charge $Q = 2n$ can be constructed by occupying the single-particle spectrum of the quadratic part of the Hamiltonian $\hhub$ with $2n$ $\uparrow$ spins. 
These eigenstates have spin quantum numbers $S_z = n$, and the lowest and highest energies $E_-$ and $E_+$ of such states are given by
\begin{eqnarray}
    E_{\pm} &=& -2n\mu\ \pm\ 2t\ \sumal{j = -n}{n-1}{\cos\left(\frac{2\pi j}{L}\right)} \nn \\
    &=&-2n\mu\ \pm\ 2t\ \cot\left(\frac{\pi}{L}\right)\sin\left(\frac{2 n \pi}{L}\right).
\label{eq:lowhigheta}
\end{eqnarray}
As a consequence of the spin-$SU(2)$ symmetry of $\hhub$, eigenstates with the same energies but with spin $S_z = 0$ can be constructed by applying the spin lowering operator on these non-interacting ferromagnetic eigenstates.
By adding a perturbation $\hI_{M+1}$ that breaks spin-flip symmetry and translation invariance to $\hhub$, we break the integrability of the one-dimensional Hubbard model~\footnote{We have numerically checked that the Hamiltonians $(\hhub + \hI_{2})$ with nearest-neighbor electrostatic interactions exhibit level repulsion and GOE level statistics for generic values of couplings $\{V^{\sigma, \sigma'}_{\br, \br'}\}$, even if they satisfy Eq.~(\ref{eq:sumVcond}).} and all the symmetries of $\hhub$ except spin $S_z$ and charge $U(1)$.
For a small perturbation strength, we expect the lowest and highest energy eigenstates of the $S_z = 0$ and $Q = 2n$ sector to still be upper and lower bounded by (approximately) $E_-$ and $E_+$ respectively. 
Thus, we expect the state $(\ed)^n\ket{\Omega}$ to be certainly in the bulk of the spectrum of its own quantum number sector $S_z = 0$, $Q = 2n$ if $E_-  < n (U - 2 \mu) < E_+$, or,
\begin{equation}
-\frac{2t}{n}\cot\left(\frac{2\pi}{L}\right)\sin\left(\frac{2n \pi}{L}\right) < U < \frac{2t}{n}\cot\left(\frac{2\pi}{L}\right)\sin\left(\frac{2n \pi}{L}\right),
\label{eq:bulkcondition}
\end{equation}
which can always be satisfied by an appropriate choice of $U$ and $t$. 
For a finite density of doublons in the thermodynamic limit ($n/L = \rho$ and $n, L \rightarrow \infty$), using Eq.~(\ref{eq:bulkcondition}) we obtain
\begin{equation}
    -\frac{\sin\left(2\pi \rho\right)}{\pi\rho} < \frac{U}{t} < \frac{\sin\left(2\pi\rho\right)}{\pi\rho}.    
\label{eq:bulkconditionrho}
\end{equation}
We could also add small spin-orbit coupling and disorder to $\hhub$ to obtain $\hgen$.
This breaks the spin $U(1)$ symmetry, which combines the quantum number sectors of various $S_z$'s with the same $Q$.  
The estimate of Eq.~(\ref{eq:bulkconditionrho}) is thus a condition under which some states of the tower $\{(\ed)^n\ket{\Omega}\}$ are quantum many-body scars of the Hamiltonian $(\hgen + \hI_{M+1})$.  
While we have broken translation invariance here, in App.~\ref{app:scarsTI} we show that these scars are in the bulk of the spectrum as long as Eq.~(\ref{eq:bulkconditionrho}) is satisfied, even when translation, inversion, and spin-flip symmetries are not broken. 
Further, in App.~\ref{app:scarshigherD}, we obtain similar conditions for the towers of states in $D$-dimensional models to lie in the bulk of the spectrum.
\section{RSGA and Quantum Scarred Models}\label{sec:RSGAquantumscars}
Exact towers of states as discussed in Sec.~\ref{sec:towers} are also found in several models of exact quantum many-body scars~\cite{Moudgalya2018a, Moudgalya2018b, schecter2019weak, iadecola2019quantum2, chattopadhyay2019quantum, mark2020unified, moudgalya2020large}. 
In this section, we briefly comment on the connections between the RSGA formalism introduced here and quantum scarred models in the literature, and in particular the unified formalism introduced in Ref.~\cite{mark2020unified}.
The theorem of Eq.~(1) in Ref.~\cite{mark2020unified} states that given an eigenstate $\ket{\psi_0}$ of Hamiltonian $H$ with energy $E_0$, and a subspace $W$ such that $\ket{\psi_0} \in W$, a tower of equally spaced states $\{(\ed)^n\ket{\psi_0}\}$ with energies $\{E_0 + n\mE\}$ is guaranteed if for any $\ket{\psi} \in W$
\begin{equation}
    (i)\ [H, \ed] \ket{\psi} = \mE \ed \ket{\psi},\;\; (ii)\ \ed \ket{\psi} \in W.
\label{eq:MLMcondition}
\end{equation}
Since the RSGAs guarantee the existence of a tower of states $\{(\ed)^n\ket{\psi_0}\}$, they satisfy the conditions of Eq.~(\ref{eq:MLMcondition}) by choosing the subspace $W = \textrm{span}\{\ket{\psi_0}, \ed\ket{\psi_0}, \cdots, (\ed)^N\ket{\psi_0}\}$, and are captured by the formalism of Ref.~\cite{mark2020unified}.
However, since we can obtain RSGAs of any order, they provide a \textit{finer} classification of quantum scarred models.
We illustrate this connection by focusing on two examples: (i) the spin-1 XY model family studied in Ref.~\cite{schecter2019weak}, which we find admit RSGAs of order $M = 1$, and (ii) the families of spin-1 scarred Hamiltonians (including the AKLT Hamiltonian) studied in Refs.~\cite{Moudgalya2018a, mark2020unified, moudgalya2020large}, which we find admit RSGAs of order $M = 2$.
For pedagogical purposes, we now detail the former and provide a similar derivation for the latter in App.~\ref{app:RSGAscars}.
The spin-1 XY Hamiltonian family on a $D$-dimensional hypercubic lattice with size $L$ in each dimension is given by
\begin{equation}
    \hxy = \underbrace{J\sumal{\langle \br, \br'\rangle}{}{(S^x_{\br} S^x_{\br'} + S^y_{\br} S^y_{\br'})}}_{H_{XY}} + \underbrace{h\sumal{\br}{}{S^z_{\br}}}_{H_z} + \underbrace{D\sumal{\br}{}{(S^z_{\br})^2}}_{H_{z^2}}. 
\label{eq:XYHamil}
\end{equation}
Throughout this section we label the spin-1 degrees of freedom by $\{+, 0, -\}$.
As discussed in Ref.~\cite{schecter2019weak}, the spin-1 XY Hamiltonian has a tower of states starting from a spin-polarized root eigenstate $\ket{\Omega} \equiv \ket{-\ -\ \cdots\ -\ -}$:
\begin{equation}
    \hxy (\pxy)^n \ket{\Omega} = (h(2n - L^D) + D L^D)(\pxy)^n \ket{\Omega},
\label{eq:RSGA0XY}
\end{equation}
for $0 \leq n \leq L^D$ and
\begin{equation}
    \pxy \equiv  \sumal{\br}{}{e^{i \pi\cdot \br} (S^+_{\br})^2}.
\label{eq:raisXY}
\end{equation}
Ref.~\cite{mark2020unified} showed that
\begin{eqnarray}
    &[H_z, \pxy] = 2h\pxy,\;\;[H_{z^2}, \pxy] = 0,\nn \\
    &[H_{XY}, \pxy] = 4J\sumal{\langle \br, \br'\rangle}{}{e^{i \bp\cdot \br} \hh_{\br,\br'}},
\label{eq:XYcomms}
\end{eqnarray}
where $\hh_{\br,\br'}$ reads
\begin{equation}
    \hh_{\br,\br'} = \ket{0\ +}\bra{-\ 0} - \ket{+\ 0}\bra{0\ -}.
\label{eq:hjk}
\end{equation}
We can thus decompose $\hxy$ as 
\begin{equation}
	\hxy = \underbrace{H_{z^2} + H_{z}}_{\hxy_{\textrm{SGA}}}+ \underbrace{H_{XY}}_{\vxy}, 
\label{eq:HXYdecomp}
\end{equation}
where $\hxy_{\textrm{SGA}}$ admits an exact SGA, i.e.,
\begin{equation}
	[\hxy_{\textrm{SGA}}, \pxy] = \exy\pxy,\;\;\;\exy = 2h.
\label{eq:HXYSGA}
\end{equation}
We thus obtain
\begin{equation}
    [\hxy, \pxy] = 2h\pxy + 4J\sumal{\langle \br, \br'\rangle}{}{e^{i\bp\cdot \br} \hh_{\br,\br'}}.
\label{eq:XYcomm}
\end{equation}
Noting that $\hh_{\br,\br'}\ket{\Omega} = 0$, we obtain
\begin{equation}
    [\hxy, \pxy]\ket{\Omega} = 2h\pxy\ket{\Omega}. 
\label{eq:RSGA1XY}
\end{equation}
Further, using Eqs.~(\ref{eq:XYcomm}) and (\ref{eq:hjk}), we obtain
\begin{equation}
    [[\hxy, \pxy], \pxy] = 4J\sumal{\langle \br, \br' \rangle}{}{[\hh_{\br,\br'}, (S^+_{\br})^2 - (S^+_{\br'})^2]} = 0. 
\label{eq:RSGA2XY}
\end{equation}
Using Eqs.~(\ref{eq:RSGA0XY}), (\ref{eq:RSGA1XY}), and (\ref{eq:RSGA2XY}), we obtain that the family of Hamiltonians of Eq.~(\ref{eq:RSGA0XY}) admit an RSGA of order $M = 1$ (see Lemma~\ref{lem:RSGA1} in App.~\ref{app:RSGAtower}) with $\ket{\psi_0} = \ket{\Omega}$, $E_0 = (D - h) L^D$, $\mE = 2h$, $\ed = \pxy$. 
%
%
%

%
%
%
%
%
%
%
%
\section{Conclusions}\label{sec:conclusions}
In this article, we have shown how quantum many-body scars based on $\eta$-pairing states can appear in generalized and perturbed fermionic Hubbard models. 
We have explored the $\eta$-pairing states in the Hubbard model and generalized in two directions.  
First, casting $\eta$-pairing as a real-space phenomenon, we find a highly general Hubbard Hamiltonian potentially with disorder and spin-orbit coupling that exhibits a Spectrum Generating Algebra (SGA). 
Second, we introduce the concept of Restricted SGA (RSGA) and add use it to find various perturbations to the (generalized) Hubbard models that preserve the $\eta$-pairing tower starting from the vacuum state.  
The states of this tower have a sub-thermal entanglement entropy, and we analytically obtain conditions for the states of this tower to lie in the bulk of the spectrum of their quantum number sector, showing that they are examples of quantum many-body scars. 
We further connected RSGAs to some models of exactly solvable quantum scars in the literature, particularly the first two examples of towers of quantum scars in the AKLT model~\cite{Moudgalya2018a} and the spin-1 XY model~\cite{schecter2019weak}.
The scars there can thus be explained by the existence of RSGAs obtained by perturbing Hamiltonians with exact SGAs.
There are many natural extensions to this work.
It is important to understand the connection of RSGAs with models of quantum scars that exhibit multi-site quasiparticles~\cite{iadecola2019quantum2, chattopadhyay2019quantum, moudgalya2020large}, including Hubbard models with generalized $\eta$-pairing~\cite{vafek2017entanglement}, where the discussions in this work do not seem to generalize easily. 
Further, the RSGAs described here closely resemble algebraic structures introduced in earlier works both in the context of ground states~\cite{batista2009canted, wouters2018exact} as well as quantum scars~\cite{mark2020unified, bull2020quantum}, and it is highly desirable to better understand the connections between them, and also connections to the embedding construction in Ref.~\cite{shiraishi2019connection}. 
Appropriate generalizations of RSGAs might also provide a way to construct closed solvable subspaces that are not necessarily equally spaced towers of states, akin to the closed Krylov subspaces found in several constrained systems~\cite{iadecola2018exact, sala2019ergodicity, khemani2019local, moudgalya2019thermalization}.
On a different note, given that the SGAs and RSGAs survive in the presence of disorder, it would be interesting to understand the existence and implications of these towers of states in the many-body localized regime in Hubbard models~\cite{mondaini2015many, prelov2016absence, kondov2015disorder, kozarzewski2018spin, mierzejewski2018counting}.
Beyond Hamiltonian systems, it would be interesting to explore $\eta$-pairing in Floquet~\cite{kitamura2016eta} and open quantum systems~\cite{buca2019non, tindall2019heating}, and obtain RSGA-like algebraic structures to construct models of Floquet quantum many-body scars~\cite{pai2019robust, mukherjee2019collapse, haldar2019scars, sugiura2019manybody, zhao2020quantum, mizuta2020exact}.
\textit{Note added}: A related work by D.~K.~Mark and O.~I.~Motrunich~\cite{mark2020etapairing} appeared in the same arXiv posting.
\section*{Acknowledgements}
We thank Abhinav Prem and Lesik Motrunich for useful discussions, and Hosho Katsura and Daniel Mark for comments on a draft.  
S.M. acknowledges the hospitality of LPENS, Paris and NORDITA, Stockholm where parts of this work were completed.
N.R and B.A.B. were supported by the Department of Energy Grant No. de-sc0016239, the Schmidt Fund for Innovative Research, Simons Investigator Grant No. 404513  the Packard Foundation. Further support was provided by the National Science Foundation EAGER Grant No. DMR 1643312, NSF-MRSEC DMR-1420541, BSF Israel US foundation No. 2018226, and ONR No. N00014-20-1-2303. 
\appendix
\section{Useful Identities}\label{app:fermialgebra}
In this appendix, we provide some useful operator identities that we use in this article.
We denote spinful fermionic creation and annihilation operators by $\{\cd_{\br,\sigma}\}$ and $\{c_{\br,\sigma}\}$, where $\br$ denotes the site index and $\sigma$ the spin index.
These obey the algebra
\begin{eqnarray}
    &\{c_{\br, \sigma}, c_{\br', \sigma'}\} = \{\cd_{\br, \sigma}, \cd_{\br', \sigma'}\} = 0 \nn \\
    &\{c_{\br, \sigma}, \cd_{\br', \sigma'}\} = \delta_{\br, \br'} \delta_{\sigma, \sigma'}. 
\label{eq:fermialgebra}
\end{eqnarray}
Further, defining the operators
\begin{equation}
   \hn_{\br,\sigma} \equiv \cd_{\br, \sigma} c_{\br, \sigma},\;\;\; \ed_{\br} \equiv \cd_{\br, \uparrow} \cd_{\br, \downarrow},\;\;\; \eta_{\br} \equiv -c_{\br, \uparrow} c_{\br, \downarrow}
\label{eq:numetadefn}
\end{equation}
we directly obtain the useful relations
\begin{eqnarray}
	&&[\cd_{\br, \sigma}, \hn_{\br', \sigma'}] = - \delta_{\br, \br'} \delta_{\sigma, \sigma'} \cd_{\br, \sigma},\nn \\
	&&[c_{\br, \sigma}, \hn_{\br', \sigma'}] = \delta_{\br, \br'} \delta_{\sigma, \sigma'} c_{\br, \sigma}.
\label{eq:usefulcoms}
\end{eqnarray}
We also obtain
\begin{eqnarray}
    &&[\hn_{\br, \sigma}, \ed_{\br'}] = \delta_{\br, \br'} \ed_{\br} ,\nn\\
    &&[\hn_{\br, \uparrow} \hn_{\br,\downarrow}, \ed_{\br'}] = \delta_{\br, \br'} \ed_{\br},\label{eq:usefuletacoms1} \\
    &&[\ed_{\br'}, \ed_{\br}] = 0, \nn \\
    &&[\eta_{\br'}, \ed_{\br} ] = \delta_{\br, \br'}(1 - \hn_{\br, \uparrow} - \hn_{\br, \downarrow}).
\label{eq:usefuletacoms2}
 \end{eqnarray}
Using Eq.~(\ref{eq:usefulcoms}), we also obtain
\begin{equation}
    [\cd_{\br',\sigma'} c_{\br,\sigma}, \ed_{\br}] = -s_\sigma \cd_{\br,\bar{\sigma}} \cd_{\br',\sigma'},
\label{eq:simpcom}
\end{equation}
where we have defined
\begin{equation}
    s_\sigma = \twopartdef{+1}{\sigma = \uparrow}{-1}{\sigma = \downarrow},\;\;\; \bar{\sigma} = \twopartdef{\downarrow}{\sigma = \uparrow}{\uparrow}{\sigma = \downarrow}.
\label{eq:sigmadefnsapp}
\end{equation}

\section{$\boldsymbol{\eta}$-pairing with disorder and spin-orbit coupling}\label{app:etareview}
Here we derive the conditions for the $\eta$ operator of Eq.~(\ref{eq:etadefn}) to commute with a generic one-body hopping operator of the form
\begin{equation}
    \hT^{\sigma, \sigma'}_{\br, \br'} = \left(t^{\sigma,\sigma'}_{ \br, \br'}\cd_{\br, \sigma} c_{\br', \sigma'} + t^{\sigma',\sigma}_{\br', \br}\cd_{\br', \sigma'} c_{\br, \sigma}\right).
\label{eq:onebodyhopping}
\end{equation}  
Our aim is to determine a set of conditions on $\{q_{\br}\}$, $\{t^{\sigma,\sigma'}_{\br,\br'}\}$ such that
\begin{equation}
    [\sumal{\sigma,\sigma'}{}{\hT^{\sigma, \sigma'}_{\br, \br'}}, \ed] = [\sumal{\sigma,\sigma'}{}{\hT^{\sigma, \sigma'}_{\br, \br'}}, q_{\br} \ed_{\br} + q_{\br'}\ed_{\br'}] = 0.
\label{eq:Tetacommreqd}
\end{equation}
We first compute $[\hT^{\sigma,\sigma'}_{\br,\br'}, q_{\br} \ed_{\br}]$:
\begin{eqnarray}
    [\hT^{\sigma, \sigma'}_{\br, \br'}, q_{\br} \ed_{\br}] &=& [t^{\sigma,\sigma'}_{\br, \br'}\cd_{\br, \sigma} c_{\br', \sigma'} + t^{\sigma',\sigma}_{\br', \br} \cd_{\br', \sigma'} c_{\br, \sigma},\;\; q_{\br}\cd_{\br,\uparrow}\cd_{\br,\downarrow}] \nn \\ 
    &=& [t^{\sigma',\sigma}_{\br', \br} \cd_{\br', \sigma'} c_{\br, \sigma},\;\; q_{\br}\cd_{\br,\uparrow}\cd_{\br,\downarrow}] \nn \\
    &=& -q_{\br} t^{\sigma',\sigma}_{\br',\br} s_{\sigma} \cd_{\br, \bar{\sigma}}\cd_{\br',\sigma'},
\label{eq:Tetarcomm}
\end{eqnarray}
where we have used Eqs.~(\ref{eq:fermialgebra}) and (\ref{eq:simpcom}).
Similarly, we also obtain
\begin{eqnarray}
    [\hT^{\sigma, \sigma'}_{\br, \br'}, q_{\br'} \ed_{\br'}] &=& [t^{\sigma,\sigma'}_{\br, \br'}\cd_{\br, \sigma} c_{\br', \sigma'} + t^{\sigma',\sigma}_{\br', \br} \cd_{\br', \sigma'} c_{\br, \sigma},\;\; q_{\br'}\cd_{\br',\uparrow}\cd_{\br',\downarrow}] \nn \\ 
    &=& [t^{\sigma,\sigma'}_{\br, \br'}\cd_{\br, \sigma} c_{\br', \sigma'},\;\; q_{\br'}\cd_{\br',\uparrow}\cd_{\br',\downarrow}] \nn \\
    &=& -q_{\br'} t^{\sigma,\sigma'}_{\br,\br'} s_{\sigma'} \cd_{\br', \bar{\sigma'}}\cd_{\br,\sigma}.
\label{eq:Tetarpcomm}
\end{eqnarray}
Using Eqs.~(\ref{eq:Tetacommreqd}), (\ref{eq:Tetarcomm}), and (\ref{eq:Tetarpcomm}), we obtain
\begin{eqnarray}
    [\sumal{\sigma,\sigma'}{}{\hT^{\sigma,\sigma'}_{\br, \br'}}, \ed] 
    &=& -\sumal{\sigma,\sigma'}{}{\left(q_{\br}t^{\sigma',\sigma}_{\br', \br}s_\sigma - q_{\br'}t^{\bar{\sigma},\bar{\sigma'}}_{\br, \br'}s_{\bar{\sigma'}}\right)\cd_{\br,\bar{\sigma}} \cd_{\br',\sigma'}} \nn \\
    &=& -\sumal{\sigma,\sigma'}{}{\left(q_{\br}t^{\sigma',\sigma}_{\br', \br}s_\sigma + q_{\br'}t^{\bar{\sigma},\bar{\sigma'}}_{\br, \br'}s_{\sigma'}\right)\cd_{\br,\bar{\sigma}} \cd_{\br',\sigma'}},\nn \\
\label{eq:etagraphTcomm}
\end{eqnarray}
where we have used Eq.~(\ref{eq:fermialgebra}) and $s_{\bar{\sigma'}} = -s_{\sigma'}$.
Eq.~(\ref{eq:Tetacommreqd}) is thus satisfied by setting 
\begin{equation}
    q_{\br}t^{\sigma',\sigma}_{\br', \br}s_\sigma + q_{\br'}t^{\bar{\sigma},\bar{\sigma'}}_{\br, \br'}s_{\sigma'} = 0\;\;\;\forall \sigma, \sigma'.
\label{eq:TSGAcondition}
\end{equation}
\section{Tower of States from (Restricted) Spectrum Generating Algebras}\label{app:RSGAtower}
Here we show that the (Restricted) Spectrum Generating Algebras lead to the existence of a tower of exact eigenstates of the Hamiltonian.
We work with a Hamiltonian $H$ and ``root eigenstate" $\ket{\psi_0}$ from which the tower is generated by the application of $\ed$ operator, and use the definition of Eqs.~(\ref{eq:psindefn}) and (\ref{eq:Hndefn}).
We define a set of states $\{\ket{\psi_n}\}$ as
\begin{equation}
    \ket{\psi_n} \equiv (\ed)^n\ket{\psi_0},
\label{eq:psindefnapp}
\end{equation}
and a set of operators $\{H_n\}$ as
\begin{equation}
    H_0 \equiv H,\;\; H_{n+1} \equiv [H_n, \ed],\;\;\forall n \geq 0. 
\label{eq:Hndefnapp}
\end{equation} 
\begin{lemma}[SGA]\label{lem:SGA}
If the Hamiltonian $H$ and operator $\ed$ satisfy the conditions
\begin{enumerate}
    \item[(i)] $H\ket{\psi_0} = E_0\ket{\psi_0}$ 
    \item[(ii)] $[H, \ed] = \mE \ed$  
\end{enumerate}
then 
\begin{equation}
    H\ket{\psi_n} = (E_0 + n\mE)\ket{\psi_n}\;\;\textrm{or}\;\;\ket{\psi_n} = 0. 
\label{eq:SGAlemma}
\end{equation}
\end{lemma}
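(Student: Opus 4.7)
The plan is a straightforward induction on $n$, exploiting the commutation relation (ii) to move $H$ past each successive factor of $\ed$.

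The base case $n=0$ is immediate from condition (i). For the inductive step, I would assume $H\ket{\psi_n} = (E_0 + n\mE)\ket{\psi_n}$ and compute
\begin{equation}
H\ket{\psi_{n+1}} = H\,\ed\ket{\psi_n} = \bigl(\ed H + [H,\ed]\bigr)\ket{\psi_n}.
\end{equation}
Using condition (ii) to rewrite $[H,\ed] = \mE\,\ed$, and then applying the inductive hypothesis to $H\ket{\psi_n}$, the right-hand side collapses to $\bigl(E_0 + (n+1)\mE\bigr)\ed\ket{\psi_n} = \bigl(E_0+(n+1)\mE\bigr)\ket{\psi_{n+1}}$. Thus whenever $\ket{\psi_{n+1}}$ is nonzero it is an eigenstate of $H$ with eigenvalue $E_0 + (n+1)\mE$.

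The only caveat is the disjunction in the statement: the tower can terminate when $(\ed)^n\ket{\psi_0}$ annihilates. This is handled trivially, since if $\ket{\psi_n} = 0$ then $\ket{\psi_m} = \ed^{m-n}\ket{\psi_n} = 0$ for all $m \geq n$, and the eigenvalue equation holds vacuously at those levels. In particular, no further structural hypothesis on $\ed$ (such as nilpotency, although it holds for the physical $\ed$ of Eq.~(\ref{eq:etadefn}) by Pauli exclusion) is needed for the lemma.

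There is no real obstacle here: the argument is essentially the textbook ladder-operator calculation for $SU(2)$ raising operators applied to the abstract setting, and the only ingredient beyond the two hypotheses is the elementary identity $H\ed = \ed H + [H,\ed]$. The entire proof fits in a few lines of display math plus a one-sentence remark handling the terminating case.
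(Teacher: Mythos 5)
Your proof is correct and is essentially identical to the paper's: both proceed by induction on $n$, using $H\ed = \ed H + [H,\ed]$ together with condition (ii) and the inductive hypothesis, with the base case supplied by condition (i) and the vanishing case handled by the disjunction. No differences worth noting.
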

\begin{proof}
The proof proceeds straightforwardly via induction. 
Assuming $\ket{\psi_m}$ satisfies Eq.~(\ref{eq:SGAlemma}), we show $\ket{\psi_{m+1}}$ satisfies Eq.~(\ref{eq:SGAlemma}) provided it does not vanish. 
Using condition (ii), we obtain
\begin{eqnarray}
    &[H, \ed]\ket{\psi_m} = \mE \ket{\psi_m} \implies (H \ed - \ed H)\ket{\psi_m} = \mE\ket{\psi_m} \nn \\
    &\implies H \ed \ket{\psi_{m}} = \left(E_0 + m\mE + \mE\right)\ed \ket{\psi_{m}}.
\label{eq:SGAproof}
\end{eqnarray}
Thus, either $\ket{\psi_{m+1}} = 0$ or $H\ket{\psi_{m+1}} = \left(E_0 + (m+1)\mE\right)\ket{\psi_{m+1}}$.
Since Eq.~(\ref{eq:SGAlemma}) is satisfied for $m = 0$ (due to condition (i)), this concludes the proof.
\end{proof}
\begin{lemma}[RSGA-1]\label{lem:RSGA1}
If the Hamiltonian $H$, operator $\ed$, and state $\ket{\psi_0}$ such that $\ed\ket{\psi_0} \neq 0$ satisfy the conditions
\begin{enumerate}
    \item[(i)] $H\ket{\psi_0} = E_0\ket{\psi_0}$
    \item[(ii)] $[H, \ed]\ket{\psi_0} = \mE\ed\ket{\psi_0}$ $(i.e.\ H_1 \ket{\psi_0} = \mE\ket{\psi_1})$
    \item[(iii)] $[[H, \ed],\ed] = 0$ $(i.e.\ H_2 = 0)$
\end{enumerate}
then 
\begin{equation}
    H\ket{\psi_n} = (E_0 + n\mE)\ket{\psi_n}\;\;\textrm{or}\;\;\ket{\psi_n} = 0. 
\label{eq:RSGAlemma}
\end{equation}
\end{lemma}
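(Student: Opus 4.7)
The plan is to run an induction on $n$, but unlike the proof of Lemma~\ref{lem:SGA} we cannot move $[H,\ed]$ past $\ed$ factors using condition (ii) alone, since (ii) is only asserted on $\ket{\psi_0}$ rather than as an operator identity. The role of condition (iii) is precisely to repair this: since $H_2 = [H_1,\ed] = 0$ as an operator identity, $H_1$ commutes with every power of $\ed$, which lets us propagate (ii) from the root state to the whole tower.

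Concretely, my first step will be to upgrade condition (ii) to the statement
\begin{equation}
    H_1 \ket{\psi_n} \;=\; \mE\,\ket{\psi_{n+1}} \qquad \forall\, n \geq 0.
\label{eq:RSGA1plan1}
\end{equation}
This follows immediately from (iii): $H_1 \ket{\psi_n} = H_1 (\ed)^n \ket{\psi_0} = (\ed)^n H_1 \ket{\psi_0} = (\ed)^n \mE \ed \ket{\psi_0} = \mE \ket{\psi_{n+1}}$, where in the middle equality I pulled $H_1$ through the $n$ copies of $\ed$ using $[H_1,\ed]=0$.

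With \eqref{eq:RSGA1plan1} in hand, the induction is straightforward. The base case $n=0$ is condition (i). For the inductive step, assume $H\ket{\psi_n} = (E_0 + n\mE)\ket{\psi_n}$ or $\ket{\psi_n}=0$. Writing $H\ed = \ed H + H_1$, I compute
\begin{equation}
    H \ket{\psi_{n+1}} \;=\; H \ed \ket{\psi_n} \;=\; \ed H \ket{\psi_n} + H_1 \ket{\psi_n},
\end{equation}
and plug in the inductive hypothesis together with \eqref{eq:RSGA1plan1} to obtain $H\ket{\psi_{n+1}} = (E_0 + (n+1)\mE)\ket{\psi_{n+1}}$, which is the desired relation (or $\ket{\psi_{n+1}} = 0$, in which case the claim is vacuous).

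There is no real obstacle; the only subtle point is recognizing that condition (iii) is exactly the operator identity needed to turn the on-shell relation (ii) into an everywhere-on-the-tower relation, after which the argument mirrors Lemma~\ref{lem:SGA}. I will state \eqref{eq:RSGA1plan1} explicitly as an intermediate step so that the role of each of the three hypotheses is transparent.
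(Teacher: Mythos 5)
Your proof is correct and follows essentially the same two-stage strategy as the paper: first promote condition (ii) to $H_1\ket{\psi_n}=\mE\ket{\psi_{n+1}}$ for all $n$ using the operator identity $[H_1,\ed]=0$, then run the standard SGA-style induction on $H\ket{\psi_n}$. The only (immaterial) difference is that you pull $H_1$ directly through $(\ed)^n$ in one step, whereas the paper establishes the same intermediate relation by a second induction.
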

\begin{proof}
The proof proceeds by induction on two levels. 
We first wish to show
\begin{equation}
    H_1 \ket{\psi_n} = \mE\ket{\psi_{n+1}} 
\label{eq:RSGAtoshow}
\end{equation}
For the purposes of induction, we assume Eq.~(\ref{eq:RSGAtoshow}) is valid for $\ket{\psi_m}$. 
Using condition (iii) we obtain
\begin{eqnarray}
    &H_2 \ket{\psi_m} = [H_1, \ed]\ket{\psi_m} = 0 \implies (H_1 \ed - \ed H_1)\ket{\psi_m} = 0 \nn \\
    &\implies H_1 \ket{\psi_{m+1}} = \mE\ket{\psi_{m+2}}.
\label{eq:RSGAproof}
\end{eqnarray}
Since Eq.~(\ref{eq:RSGAtoshow}) is satisfied for $m = 0$ (due to condition (ii)), this concludes the proof of Eq.~(\ref{eq:RSGAtoshow}).  
Using Eq.~(\ref{eq:RSGAtoshow}), we show Eq.~(\ref{eq:RSGAlemma}) by induction again.
Assuming Eq.~(\ref{eq:RSGAlemma}) holds for $\ket{\psi_m}$, using Eq.~(\ref{eq:SGAproof}) we can show $\ket{\psi_{m+1}}$ also satisfies it.
Since $\ket{\psi_0}$ satisfies Eq.~(\ref{eq:RSGAlemma}) (due to condition (i)), this concludes the proof.  
\end{proof}
\begin{lemma}[RSGA-M]\label{lem:RSGA2}
If the Hamiltonian $H$, operator $\ed$, and state $\ket{\psi_0}$ such that $(\ed)^{n}\ket{\psi_0} \neq 0$ for $n \leq M$ satisfy the conditions
\begin{enumerate}
    \item[(i)] $H\ket{\psi_0} = E_0\ket{\psi_0}$
    \item[(ii)] $H_1\ket{\psi_0} = \mE\ket{\psi_1}$ 
    \item[(iii)] $H_n\ket{\psi_0} = 0\;\;\forall n,\;\; 2 \leq n \leq M$
    \item[(iv)] $H_{M+1} = 0$ $(i.e.\ [H_{M}, \ed] = 0)$
\end{enumerate}
then 
\begin{equation}
    H\ket{\psi_n} = (E_0 + n\mE)\ket{\psi_n}\;\;\textrm{or}\;\;\ket{\psi_n} = 0. 
\label{eq:RSGA2lemma}
\end{equation}
\end{lemma}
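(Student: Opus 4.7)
The plan is to derive a single Leibniz-style identity pushing $H$ past $(\ed)^n$, and then use conditions (i)--(iv) to show that only two terms in the expansion survive when applied to $\ket{\psi_0}$.

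First I would prove by induction on $n$ the operator identity
\begin{equation}
    H\,(\ed)^n = \sum_{k=0}^{n} \binom{n}{k}\,(\ed)^{n-k}\,H_k,
\label{eq:leibnizexp}
\end{equation}
which uses only the recursion $H_{k+1} = [H_k, \ed]$ from Eq.~(\ref{eq:Hndefnapp}). The inductive step multiplies the identity on the right by $\ed$, substitutes $H_k \ed = \ed H_k + H_{k+1}$ inside the sum, relabels the shifted sum, and recombines the two sums via Pascal's identity $\binom{n}{k} + \binom{n}{k-1} = \binom{n+1}{k}$. This is the only genuine calculation in the argument.

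Second, condition (iv) states $H_{M+1} = 0$ as an operator; combined with the recursion $H_{k+1} = [H_k, \ed]$, a one-line induction gives $H_k = 0$ for every $k \geq M+1$. Together with condition (iii), this yields $H_k \ket{\psi_0} = 0$ for all $k \geq 2$.

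Third, I would apply Eq.~(\ref{eq:leibnizexp}) to $\ket{\psi_0}$. Every term with $k \geq 2$ drops out by the previous step, leaving
\begin{equation}
    H \ket{\psi_n} = (\ed)^n H_0 \ket{\psi_0} + n\,(\ed)^{n-1} H_1 \ket{\psi_0}.
\label{eq:twotermsrsga}
\end{equation}
Condition (i) gives $H_0 \ket{\psi_0} = E_0 \ket{\psi_0}$ and condition (ii) gives $H_1 \ket{\psi_0} = \mE \ed \ket{\psi_0}$. Substituting and using $\ket{\psi_n} = (\ed)^n \ket{\psi_0}$, the right-hand side becomes $(E_0 + n\mE)\ket{\psi_n}$, as desired (the alternative $\ket{\psi_n}=0$ makes both sides trivially zero).

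I do not anticipate a substantive obstacle: the RSGA-$M$ hypotheses are engineered precisely so that, after evaluation on $\ket{\psi_0}$, the Leibniz expansion of $H(\ed)^n$ collapses to its two leading contributions. Lemma~\ref{lem:RSGA1} handles the $M=1$ case via a nested two-level induction, and in principle one could iterate that argument $M$ times; however, going through Eq.~(\ref{eq:leibnizexp}) gives a uniform and cleaner proof valid for arbitrary $M$, and cleanly isolates where each of the four conditions is used.
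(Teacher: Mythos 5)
Your proof is correct, and it takes a genuinely different route from the paper's. You establish the closed-form Leibniz/binomial identity
\begin{equation*}
    H\,(\ed)^n = \sumal{k = 0}{n}{\binom{n}{k}\,(\ed)^{n-k}\,H_k},
\end{equation*}
whose inductive step via $H_k \ed = \ed H_k + H_{k+1}$ and Pascal's identity is sound; condition (iv) (as an operator identity) propagates to $H_k = 0$ for all $k \geq M+1$, so together with (iii) every term with $k \geq 2$ annihilates $\ket{\psi_0}$, and the two surviving terms give $(E_0 + n\mE)\ket{\psi_n}$ directly from (i) and (ii). The paper instead runs a descending cascade: it observes that (iii) and (iv) put $H_M$ in the hypotheses of Lemma~\ref{lem:SGA} (with $E_0, \mE \rightarrow 0$), concludes $H_M\ket{\psi_n} = 0$ for all $n$, then steps down through $H_{M-1}, \ldots, H_2$ by repeating the same argument, and finally hands off to the two-level induction of Lemma~\ref{lem:RSGA1}. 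Your version buys a uniform, single-shot computation valid for every $M$ that isolates exactly which hypothesis kills which term (and incidentally shows that only the consequence $H_k\ket{\psi_0} = 0$ for all $k \geq 2$ is really needed, not the full operator statement of (iv)); the paper's version buys reuse of the already-proven Lemmas~\ref{lem:SGA} and \ref{lem:RSGA1} as black boxes and avoids any combinatorial identity. Both are complete; the only cosmetic caveat in yours is the $n = 0,1$ edge cases of the expansion, where the $k$-sum simply truncates and the $n(\ed)^{n-1}$ term is absent or trivial.
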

\begin{proof}
We start with conditions (iii) and (iv) and note that they satisfy the conditions (i) and (ii) of Lemma~\ref{lem:SGA} with the replacements $H \rightarrow H_{M}$, $E_0 \rightarrow 0$, and $\mE \rightarrow 0$. 
Using Eq.~(\ref{eq:SGAlemma}), we arrive at
\begin{equation}
    H_{M} \ket{\psi_n} = 0 = [H_{M-1}, \ed]\ket{\psi_n}\;\;\forall\ n. 
\label{eq:HMcondition1}
\end{equation}
Further, using Eq.~(\ref{eq:SGAproof}) along with condition (i) of Lemma~\ref{lem:SGA} with the replacements $H \rightarrow H_{M-1}$, $E_0 \rightarrow 0$, $\mE \rightarrow 0$, as a consequence of Eq.~(\ref{eq:HMcondition1}) we obtain 
\begin{equation}
    H_{M-1}\ket{\psi_n} = 0 = [H_{M-2}, \ed]\ket{\psi_n}\;\;\forall\ n,
\label{eq:HMcondition2}
\end{equation}
which is the same as Eq.~(\ref{eq:HMcondition1}) with the replacements $H_M \rightarrow H_{M-1}$ and $H_{M-1} \rightarrow H_M$. 
Repeating the steps from Eq.~(\ref{eq:HMcondition1}) to Eq.~(\ref{eq:HMcondition2}) successively replacing $H_n \rightarrow H_{n-1}$ at each step, we finally arrive at 
\begin{equation}
    H_2 \ket{\psi_n} = 0 = [H_1, \ed]\ket{\psi_n}\;\;\forall\ n.
\label{eq:HMconditionfinal}
\end{equation}
The proof of Eq.~(\ref{eq:RSGA2lemma}) can then be completed following the same steps as the proof of Lemma~\ref{lem:RSGA1}.
\end{proof}
\section{Quantum Scars with Translation Invariance}\label{app:scarsTI}
Here we show that the states of the tower $\{(\ed)^n\ket{\Omega}\}$ lie in the bulk of the spectrum of the Hamiltonian $(\hhub + \hI_{M+1})$ even if translation, inversion, spin-flip, and $S_z$ symmetries are preserved.
We start with the one-dimensional Hubbard model of even length $L$ with periodic boundary conditions, and add to it a small perturbation $\hI_{M+1}$ that breaks the integrability of spin $SU(2)$ symmetry and $\hhub$, but preserves the spin-flip and translation symmetries.
We have numerically verified that such an $\hI_{M+1}$ can be found for generic choices of parameters $\{V^{\{\sigma_j\}}_{\{\br_j\}}\}$ that satisfy Eq.~(\ref{eq:Imcondition}). 
The states $(\ed)^n\ket{\Omega}$ for even $n$ has the following quantum numbers: momentum $k = 0$, charge $Q = 2n$, spin $S_z = 0$, inversion $I = +1$ (for inversion about a site), spin-flip $P_z = +1$.
To determine whether these $\eta$-pairing states lie in the bulk of the energy spectrum within their quantum number sectors, we can consider the non-interacting ferromagnetic eigenstates since their energies can be analytically obtained. 
Non-interacting ferromagnetic eigenstates with charge $Q = 2n$ and $k = 0$ can be constructed by occupying the single-particle spectrum of the quadratic part of the Hamiltonian $\hhub$ with $2n$ $\uparrow$ spins such that the total momentum adds up to $0$.
Such states with inversion quantum number $I = +1$ can be constructed by having an even number of pairs of occupied single-particle levels with momenta $k$ and $-k$, which can be realized when $n$ is even.
As a consequence of the spin $SU(2)$ symmetry of $\hhub$, states with $S_z = 0$ can be obtained by applying the spin lowering operator on these ferromagnetic states. The $S_z = 0$ states thus obtained are also guaranteed to have spin-flip quantum number $P_z = +1$ since they are part of the ferromagnetic multiplet, and the inversion quantum number remains unchanged by spin lowering.
The lowest energy ferromagnetic state with these quantum number constraints is constructed by occupying the lowest $2n$ single-particle eigenstates except the $k = 0$ level by $\uparrow$ spins.
Similarly, the highest energy ferromagnetic state with these quantum number constraints is built by occupying the highest $2n$ single-particle eigenstates except the $k = \pi$ level by $\uparrow$ spins.
Their energies thus read
\begin{eqnarray}
    E_{\pm} &=& -2n\mu \mp 2t\ \left[\left(\sumal{j = -n}{n}{\cos\left(\frac{2\pi j}{L}\right)}\right) - 1\right] \nn \\
    &=& -2n \mu \mp 2t\ \left(\csc\left(\frac{2\pi}{L}\right)\sin\left(\frac{(2n+1)\pi}{L}\right) - 1\right).\nn \\
\label{eq:EpmTI}
\end{eqnarray}
Similar to the case with disorder discussed in Sec.~\ref{sec:quantumscarconnection}, for small perturbations $\hI_{M+1}$, we can use these non-interacting states to estimate the energies of the lowest and highest excited state restricted to a given quantum number sector. 
The state $(\ed)^n\ket{\Omega}$, the eigenstate with $n$ doublons with energy $n (U - 2\mu)$, certainly lies in the bulk of the spectrum of its own quantum number sector if $E_- < n(U - 2\mu) < E_+$, or,
\begin{eqnarray}
    &-\frac{2t}{n}\ \left(\csc\left(\frac{2\pi}{L}\right)\sin\left(\frac{(2n+1)\pi}{L}\right) - 1\right) < U \nn \\
    &< \frac{2t}{n}\ \left(\csc\left(\frac{2\pi}{L}\right)\sin\left(\frac{(2n+1)\pi}{L}\right) - 1\right).
\label{eq:TIbound}
\end{eqnarray}
For a finite density of doublons in the thermodynamic limit ($n/L = \rho$, while $n, L \rightarrow \infty$), we recover the bound of Eq.~(\ref{eq:bulkconditionrho}).
\section{Quantum Scars in $\boldsymbol{D}$ dimensions}\label{app:scarshigherD}
In this appendix, we obtain the conditions for which the states of the tower $\{(\ed)^n\ket{\Omega}\}$ are in the bulk of the spectrum of the Hamiltonian $(\hhub + \hI_{M+1})$ in $D$-dimensions. 
Consider a system of $L \times L \times \cdots \times L$ sites in $D$ dimensions, and the state $(\ed)^n\ket{\Omega}$ consisting of $n$ doublons.
To obtain the conditions for $(\ed)^n\ket{\Omega}$ to lie in the bulk of the spectrum for a small perturbation $\hI_{M + 1}$, it is sufficient to obtain the energies $E_-$ and $E_+$ of the lowest and highest ferromagnetic non-interacting eigenstates with charge $Q = 2n$, as discussed in Sec.~\ref{sec:quantumscarconnection}. 
To do so, we directly work in the continuum limit in momentum space and with a finite density of doublons, i.e.
\begin{equation}
    \rho \equiv \frac{n}{L^D}.
\label{eq:doublondensity}
\end{equation}
The single-particle density of states $f(\bk)$ reads
\begin{equation}
    f(\bk) = \left(\frac{L}{2\pi}\right)^D,\;\;\; \int{\mathrm{d}^D\bk\ f(\bk)} = L^D.
\label{eq:DOS}
\end{equation}
Assuming a spherical Fermi surface, the Fermi momentum $k_F$ by filling the lowest $2n$ single-particle levels satisfies the relation
\begin{equation}
    \int_{|\bk| < k_F}{\mathrm{d}^D\bk\ f(\bk)} = \left(\frac{L}{2\pi}\right)^D \frac{\pi^{\frac{D}{2}} k_F^D}{\Gamma\left(\frac{D}{2} + 1\right)} = 2n,
\label{eq:fermicondition}
\end{equation}
where we have used the expression for the volume of $D$-dimensional sphere. 
We thus obtain 
\begin{eqnarray}
    k_F &=& \frac{2\sqrt{\pi}}{L}\left(2 n\ \Gamma\left(\frac{D}{2} + 1\right)\right)^{\frac{1}{D}} \nn \\
    &=& 2\sqrt{\pi}\left(2 \rho\ \Gamma\left(\frac{D}{2} + 1\right)\right)^{\frac{1}{D}}.
\label{eq:kfermi}
\end{eqnarray}
Note that the spherical approximation of the Fermi surface in Eq.~(\ref{eq:fermicondition}) breaks down for sufficiently large $n$ when the Fermi surface is close to the edges of the Brillouin zone, i.e. when the Fermi momentum $k_F$ obtained using Eq.~(\ref{eq:kfermi}) is comparable to $\pi$. 
Thus, the calculations in this section are strictly valid only when $k_F \ll \pi$, or when the doublon density $\rho$ satisfies
\begin{equation}
    \rho \ll \frac{\pi^{\frac{D}{2}}}{2^{D+1}\Gamma\left(\frac{D}{2} + 1\right)}.
\label{eq:doubdensity}
\end{equation} 
However, we expect that similar arguments work for the larger densities as well. 
Since the dispersion relation of the quadratic part of the $D$-dimensional Hubbard models is given by Eq.~(\ref{eq:dispersion}), the energy of the states obtained by filling all single-particle momentum levels with $|\bk| < k_F$ is given by
\begin{eqnarray}
    E_- &=& -2n\mu - 2 t D\int_{|\bk| < k_F}{\mathrm{d}^D\bk\ f(\bk)\ \cos\left(k_i\right)} \nn \\
    &=& -2n\mu - 2 t D \left(\frac{L}{2\pi}\right)^D \underbrace{\int_{|\bk| < k_F}{\mathrm{d}^D\bk\ \cos\left(k_i\right)}}_{\equiv \mathcal{I}_D(k_F)}, \nn \\
\label{eq:Energylow}
\end{eqnarray}
where $k_i$ is the component of $\bk$ along any axis. 
Evaluating the integral $\mathcal{I}_D(k_F)$ in $D$-dimensional spherical coordinates, we obtain
\begin{equation}
    \mathcal{I}_D(k_F) = (2\pi k_F)^{\frac{D}{2}} \mathcal{J}_{\frac{D}{2}}(k_F),
\label{eq:integral}
\end{equation}
where $\mathcal{J}_\alpha(x)$ is the $\alpha$-th order Bessel function of the first kind.
Note that $\mathcal{J}_\alpha(x)$ for $\alpha \in \mathbb{Z} + \frac{1}{2}$ can be expressed in terms of trigonometric functions. 
Thus, for $D = 1$ for example, we obtain
\begin{equation}
    \mathcal{I}_1(k_F) = 2 \sin(k_F).
\end{equation}
For the highest energy state, similar to Eq.~(\ref{eq:Energylow}), we obtain
\begin{equation}
    E_+ =  -2n\mu + 2 t D \left(\frac{L}{2\pi}\right)^D \mathcal{I}_D(k_F).
\label{eq:Energyhight}
\end{equation}
The state $(\ed)^n\ket{\Omega}$ with $n$ doublons is thus guaranteed to be in the bulk of the spectrum if $E_- < n(U - 2\mu) < E_+$, or, 
\begin{equation}
    - \frac{2 D}{(2\pi)^D}\frac{\mathcal{I}_D(k_F)}{\rho}  < \frac{U}{t} < \frac{2 D}{(2\pi)^D}\frac{\mathcal{I}_D(k_F)}{\rho}, 
\label{eq:genUtbound}
\end{equation}
where $\mathcal{I}_D(k_F)$ and $k_F$ are defined in Eqs.~(\ref{eq:integral}) and (\ref{eq:kfermi}) respectively. Note that we recover Eq.~(\ref{eq:bulkconditionrho}) by setting $D = 1$ in Eq.~(\ref{eq:genUtbound}).
For $D = 2$, the bound reads
\begin{equation}
    -\sqrt{\frac{8}{\pi\rho}} \mathcal{J}_1(\sqrt{8\pi\rho}) < \frac{U}{t} < \sqrt{\frac{8}{\pi\rho}} \mathcal{J}_1(\sqrt{8\pi\rho}).
\label{eq:2Dbound}
\end{equation}
\section{RSGAs in the AKLT Family of Quantum Scarred Hamiltonians}\label{app:RSGAscars}
In this section, we show that the AKLT family of quantum scarred Hamiltonians studied in Refs.~\cite{mark2020unified, moudgalya2020large} admit RSGAs of order $M = 2$. 
Throughout this section, we use the notation $\ket{J_{j,m}}$ to denote a total angular momentum eigenstates of two spin-1's with total angular momentum quantum number $j$, $0 \leq j \leq 2$, and its $z$-projection quantum number $m$, $-j \leq m \leq j$.
We refer readers to Ref.~\cite{moudgalya2020large} for details of the notation.
The one-dimensional family of quantum scarred spin-1 Hamiltonians (including the spin-1 AKLT chain) on a system size of $L$ derived in Refs.~\cite{mark2020unified, moudgalya2020large} is given by
\begin{eqnarray}
    &\haklt = \sumal{j = 1}{L}{\hh_{j,j+1}},\nn \\
    &\hh_{j, j+1} = \mE\left(\ket{J_{2,1}}\bra{J_{2,1}} + \ket{J_{2,2}}\bra{J_{2,2}}\right) \nn \\
    &+ \sumal{m, n = -2}{0}{z^{(m,n)}_j (\ket{J_{2,m}}\bra{J_{2,n}})}.  
\label{eq:AKLTfamilHamil}
\end{eqnarray}
where $(z^{(m,n)}_j) = (z^{(m,n)}_j)^\ast$.
As discussed in Refs.~\cite{mark2020unified} and \cite{moudgalya2020large}, for an even system size $L$ and periodic boundary conditions, the Hamiltonian of Eq.~(\ref{eq:AKLTfamilHamil}) contains a tower of quantum scars from a root eigenstate $\ket{G}$,
\begin{equation}
    \haklt (\paklt)^n\ket{G} = 2n\mE(\paklt)^n\ket{G},\;\; 0\leq n \leq \frac{L}{2},
\label{eq:RSGA0}
\end{equation}
where $\ket{G}$ is the spin-1 AKLT ground state~\cite{aklt1987, Moudgalya2018a}, and 
\begin{equation}
    \paklt = \sumal{j = 1}{L}{(-1)^j (S^+_j)^2}, 
\label{eq:mPdefn}
\end{equation}
which forms the analogue of the $\ed$ operator in the Hubbard models discussed in the main text. 
The spin-1 AKLT Hamiltonian~\cite{aklt1987, Moudgalya2018a} is recovered from Eq.~(\ref{eq:AKLTfamilHamil}) by setting~\cite{moudgalya2020large}
\begin{equation}
    \mE = 1,\;\;z^{(m,n)}_j = \delta_{m,n}.
\label{eq:AKLTparams}
\end{equation}
We first compute the commutator
\begin{eqnarray}
    &[\haklt, \paklt] = \sumal{j = 1}{L}{(-1)^j [\hh_{j,j+1}, (S^+_j)^2 - (S^+_{j+1})]} \nn \\
    &\equiv 2\mE \paklt + \sumal{j =1 }{L}{(-1)^j \hh^{(1)}_{j,j+1}},\nn \\
    &\hh^{(1)}_{j,j+1} = -2\sumal{n = -2}{0}{\left((z^{(-1,n)}_j - \mE)\ket{J_{1,1}}\bra{J_{2,n}}\right.} \nn \\
    &{\left.+ \sqrt{2} (z^{(-2,n)}_j - \mE)\ket{J_{1,0}}\bra{J_{2,n}}\right)},\nn\\
\label{eq:Hcomm}
\end{eqnarray}
where we have used Eq.~(\ref{eq:AKLTfamilHamil}), and~\cite{mark2020unified}
\begin{eqnarray}
    &(S^+_j)^2 - (S^+_{j+1})^2 = -2\left(\ket{J_{2,1}}\bra{J_{1,-1}} +\sqrt{2}\ket{J_{2,2}}\bra{J_{1,0}} \right.\nn \\
    &\left.+ \ket{J_{1,1}}\bra{J_{2,-1}} + \sqrt{2}\ket{J_{1,0}}\bra{J_{2,-2}}\right).
\label{eq:SpSpdiff}
\end{eqnarray}
Using Eqs.~(\ref{eq:AKLTfamilHamil}) and (\ref{eq:Hcomm}), it is apparent that $\haklt$ can be decomposed as
\begin{equation}
    \haklt = \haklt_{\textrm{SGA}} + \vaklt, 
\end{equation}
where
\begin{eqnarray}
	&\haklt_{\textrm{SGA}} \equiv \sumal{j = 1}{L}{\mE\left(\ket{J_{2,1}}\bra{J_{2,1}} + \ket{J_{2,2}}\bra{J_{2,2}}\right.} \nn \\
	&\left.- \ket{J_{2,-1}}\bra{J_{2,-1}} - \ket{J_{2,-2}}\bra{J_{2,-2}}\right),
\label{eq:AKLTdecomp}
\end{eqnarray}
and it admits an exact SGA, i.e.
\begin{equation}
	[\haklt_{\textrm{SGA}}, \paklt] = 2\mE\paklt. 
\label{eq:AKLTSGA}
\end{equation}
Further, we note that
\begin{equation}
    \hh^{(1)}_{j,j+1}\ket{G} = 0,
\label{eq:h1vanish}
\end{equation}
since the AKLT ground state does not have a total spin 2 component over neighboring sites~\cite{aklt1987}.
Using Eq.~(\ref{eq:Hcomm}), we thus obtain
\begin{equation}
    [\haklt, \paklt]\ket{G} = 2\mE\paklt\ket{G}.
\label{eq:RSGA1}
\end{equation}
We further compute the next commutator
\begin{eqnarray}
    &[[\haklt, \paklt], \paklt] = \sumal{j = 1}{L}{[h^{(1)}_{j,j+1}, (S^+_j)^2 - (S^+_{j+1})^2]} \equiv \sumal{j = 1}{L}{\hh^{(2)}_{j,j+1}} \nn \\
    &\hh^{(2)}_{j,j+1} = -4\sqrt{2}\sumal{n = -2}{0}{(z^{(-2,n)}_j-\mE) \ket{J_{2,2}}\bra{J_{2,n}}}.
\label{eq:H2comm}
\end{eqnarray}
Similar to Eq.~(\ref{eq:h1vanish}), we obtain
\begin{equation}
    \hh^{(2)}_{j,j+1}\ket{G} = 0,\;\;\textrm{and}\;\;[[\haklt, \paklt], \paklt]\ket{G} = 0. 
\label{eq:RSGA2}
\end{equation}
Using Eqs.~(\ref{eq:H2comm}) and (\ref{eq:SpSpdiff}), we further obtain
\begin{eqnarray}
    [[[\haklt, \paklt], \paklt],\paklt] &=& \sumal{j = 1}{L}{(-1)^j[\hh^{(2)}_{j,j+1}, (S^+_j)^2 - (S^+_{j+1})^2]} \nn \\
    &=& 0. 
\label{eq:RSGA3}
\end{eqnarray}
Using Eqs.~(\ref{eq:RSGA0}), (\ref{eq:RSGA1}), (\ref{eq:RSGA2}), and (\ref{eq:RSGA3}), we obtain that the family of Hamiltonians of Eq.~(\ref{eq:AKLTfamilHamil}) admit an RSGA of order $M = 2$ (see Lemma~\ref{lem:RSGA2}) with $\ket{\psi_0} = \ket{G}$, $E_0 = 0$, $\mE = 2\mE$, and $\ed = \paklt$.
Similarly, we can verify that the same algebraic structure holds for the single-site quasiparticle family of scarred Hamiltonians studied in Ref.~\cite{moudgalya2020large}, the one-dimensional spin-$S$ AKLT Hamiltonians~\cite{Moudgalya2018a} and the associated family of scarred Hamiltonians discussed in Ref.~\cite{mark2020unified}.
\bibliography{eta_pairing}
\end{document}